\newtheorem{theorem}{Theorem}[section]
\newtheorem{claim}[theorem]{Claim}
\newtheorem{proposition}[theorem]{Proposition}
\newtheorem{lemma}[theorem]{Lemma}
\newtheorem{observation}[theorem]{Observation}
\theoremstyle{definition}
\newtheorem{definition}[theorem]{Definition}
\renewcommand{\P}{\mathbf{P}} 
\newcommand{\NP}{\mathbf{NP}}
\newcommand{\barr}[1]{\overline{#1}}
\newcommand{\ignore}[1]{}
\newcommand{\cut}[1]{}
\newcommand{\sa}{\textsf{Sherali-Adams}}
\newcommand{\ls}{\textsf{Lov\'asz-Schrijver}}
\newcommand{\la}{\textsf{Lasserre}}
\newcommand{\lap}{\textrm{L\&P}} 
\newcommand{\iLS}{\textsf{LS}}  
\newcommand{\iLSp}{\textsf{LS}$_+$}
\newcommand{\iSA}{\textsf{SA}}
\newcommand{\iSAp}{\textsf{SA}$_+$}
\newcommand{\iLa}{\textsf{La}}
\def\pvc{$t$\textsc{-PVC}}
\def\bone{\textbf{1}}
\def\vc{\textsc{VC}}
\newcommand{\graph}{G=(V,E)}
\newcommand{\powerset}{\mathcal{P}}
\newcommand{\pr}[2]{\mathop{\mathbb P}_{#1}\displaylimits \left[#2\right]}
\newcommand{\Exp}[2]{\mathop{\mathbb E}_{#1}\displaylimits \left[#2\right]}
\def\distr{\mathcal{D}}
\def\psd {\succeq}
\def\bzero {{\bf 0}}
\def\be {{\bf e}}
\def\bv {{\bf v}}
\def\bb {{\bf b}}
\def\bd {{\bf d}}
\def\bq {{\bf q}}
\def \reals {\mathbb{R}}
\def \naturals {\mathbb{N}}
\newcommand{\Z}{\barr{\mathcal{Z}}}
\begin{document}

\title{\bf Lift \& Project Systems Performing on the Partial-Vertex-Cover Polytope\tnoteref{t1}}



\author[1]{Konstantinos Georgiou\fnref{fn1}}
\ead{konstantinos@ryerson.ca}
\address[1]{Department of Mathematics, Ryerson University, 350 Victoria St.  Toronto, ON, M5B 2K3  Canada}

\author[2]{Andy (Jia) Jiang\fnref{fn2}}
\ead{jia.jiang@seh.ox.ac.uk}
\address[2]{University of Oxford, Wellington Square, Oxford, OX1 2JD  United Kingdom}

\author[3]{Edward Lee\corref{corref1}\fnref{fn3}}
\ead{e45lee@uwaterloo.ca}
\address[3]{Department of Combinatorics and Optimization, University of Waterloo, 200 University Ave.  West,  Waterloo, ON  N2L 3G1  Canada}

\author[4]{Astrid A. Olave\fnref{fn2}}
\ead{aaolaveh@unal.edu.co}
\address[4]{Universidad Nacional de Colombia,  Carrera 45 \# 26-85  Edif. Uriel Guti\'{e}rrez  Bogot\'{a} D.C.,  Colombia}

\author[5]{Ian Seong\fnref{fn2}}
\ead{seongi@carleton.edu}
\address[5]{Carleton College, One North College Street Northfield, MN 55057  U.S.A.}

\author[6]{Twesh Upadhyaya\fnref{fn2}}
\ead{twesh.upadhyaya@mail.utoronto.ca}
\address[6]{University of Toronto, 27 King's College Circle, Toronto, ON  M5S 1A1 Canada}

\begin{abstract}
We study integrality gap (IG) lower bounds on strong LP and SDP relaxations derived by the \sa\ (\iSA), \ls-SDP (\iLSp), \sa-SDP (\iSAp), and \la-SDP (\iLa) lift-and-project (\lap) systems for the $t$-Partial-Vertex-Cover (\pvc) problem, a variation of the classic Vertex-Cover problem in which only $t$ edges need to be covered. \pvc\ admits a 2-approximation using various algorithmic techniques, all relying on a natural LP relaxation. With starting point this LP relaxation, our main results assert that for every $\epsilon>0$, 
level-$\Theta(n)$ LPs or SDPs derived by all known \lap\ systems that have been used for positive algorithmic results (but the Lasserre hierarchy) have IGs at least $(1-\epsilon)n/t$, where $n$ is the number of vertices of the input graph. Our lower bounds are nearly tight, in that level-$n$ relaxations, even of the weakest systems, have integrality gap 1.   Additionally, we give a $O(\sqrt{n})$ integrality gap for the Level-1 Lasserre system and a superconstant general integrality gap for all Level-$\Theta(n)$ Lasserre derived SDPs.

As lift-and-project systems have given the best algorithms known for numerous combinatorial optimization problems, our results show that restricted yet powerful models of computation derived by many \lap\ systems fail to witness $c$-approximate solutions to \pvc\ for any constant $c$, and for $t=O(n)$. 
As further motivation for our results, we show that the SDP that has given the best algorithm known for \pvc\ has integrality gap $n/t$ on instances that can be solved by the level-1 LP relaxation derived by the \iLS\ system. This constitutes another rare phenomenon where (even in \cut{some} specific instances) a static LP outperforms an SDP that has been used for the best approximation guarantee for the problem at hand. 
Finally, we believe our results are of independent interest as they are among the very few known integrality gap lower bounds for LP and SDP 0-1 relaxations in which not all variables possess the same semantics in the underlying combinatorial optimization problem.
To achieve our results, we utilize a common methodology of constructing solutions to LP relaxations that almost trivially satisfy constraints derived by all SDP \lap\ systems known to be useful for algorithmic positive results (except the \iLa\ system). The latter sheds some light as to why \iLa\ tightenings seem strictly stronger than \iLSp\ or \iSAp\ tightenings. 
\end{abstract}

\begin{keyword}
Partial vertex cover \sep combinatorial optimization \sep linear programming \sep
semidefinite programming \sep lift and project systems \sep integrality gaps.
\end{keyword}

\maketitle
 
\section{Introduction}

Let $\graph$\ be a graph on $n$ vertices 
and $t \in\naturals$, with $t\leq |E|$. A subset of vertices $S$ that are incident to at least $t$ many edges is called a \textit{$t$-partial vertex cover}. In the $t$-Partial-Vertex-Cover (\pvc) optimization problem, the goal is to find a $t$-partial vertex cover $S$ of minimum size. 
\pvc\ is a tractable optimization problem whenever $t=\Theta(1)$\cut{, since one can simply enumerate in polynomial time all $\binom{|E|}{p}$ many subinstances of $G$ induced by $p$ many edges, and then solve optimally (using brute-force) each subinstance insisting in covering all edges, and finally choosing the best solution found}. In the other extreme, $|E|$\textsc{-PVC} is exactly the classic $\NP$-hard problem known as minimum Vertex-Cover (\vc). As such, any hardness of approximation for \vc\ translates to the same hardness for $|E|$\textsc{-PVC}. In particular, $|E|$\textsc{-PVC} is 1.36 and $(2-o(1))$ hard to approximate assuming $\P \not = \NP$~\cite{DS05} and the Unique Games Conjecture~\cite{KR03} respectively. Moreover, 
there exists an approximation preserving reduction from \pvc\ to \vc\ as long as $n/t=n^{\Theta(1)}$ \cite{BB98}.
Unlike \vc, \pvc\ is also known to be hard in bipartite graphs~\cite{CS13}.
On the positive side, \cite{Hoc98,Mes09,TDY14} have proposed 2-approximation algorithms even for the weighted version of \pvc\ (see~\cite{KPS11} for a wider family of results concerning partial covering problems). 
The common starting point of all these results is the standard 0-1 LP relaxation for \pvc\ (see~\eqref{equa: LP pvc} in Section~\ref{sec: problem def and LP relaxation}).
The best (asymptotic) approximation known for \pvc\ relies on a SDP relaxation and achieves a $2-\Omega\left( \log \log n / \log n\right)$ ratio~\cite{HS02}. 

\cut{, \pvc\ is known to be fixed parameter tractable~\cite{Bla03}, while the best exact algorithm requires $O^*(1.396^t)$ steps~\cite{KLR08}. As for its approximability, which is also the focus of the current work,}
\cut{ FIX AFTER RECOVER ...... ignore the 2014 publication which is not reviewed yet. 
 starting with~\cite{Hoc98} that uses a Lagrangian-based attack, \cite{Mes09} that uses a primal-dual approach, and \cite{TDY14} that uses iterative rounding.}

\cut{
As a result, \pvc\ is yet another example of an intractable combinatorial optimization problem for which convex-programming based algorithms have given the best algorithms known. 
}

A standard performance measure for convex-programming (LP or SDP) relaxations is the so-called integrality gap (IG), i.e. the worst possible ratio 
between the cost of the exact optimal solution and the cost of the relaxation.
As a measure of complexity,  IG upper or lower bounds are informative for two main reasons: (1) the majority of convex-programming based approximation algorithms attain an approximation ratio equal to the best provable upper bound on the IG. 
(2) Convex-programming relaxations can be seen as a restricted and static model of computation that can immediately witness (using fractional solutions) the existence of good (integral and) approximate solutions, without even finding them.

In this direction, it is notable that for a long series of combinatorial optimization problems, the best approximability known agrees with the IG of natural convex-programming relaxations, \cite{Rag08} being the most notable example. 
\cut{
A characteristic example is \vc, for which Halperin's SDP~\cite{Hal02} has IG at most $2-\Omega( \log \log \Delta / \log \Delta)$ (where $\Delta$ is the maximum vertex degree of the input graph), which is the best possible approximability (with respect to the asymptotic improvement upon constant 2) modulo the Unique Games Conjecture~\cite{AKS11}.
}
In contrast, all analyses for convex-programming relaxations for \pvc\ \cut{\cite{Hoc98,Mes09,TDY14,HS02},}
\cite{HS02,Hoc98,Mes09}
witness some integral solution with cost $sol$ to the relaxation satisfying $sol \leq 2 \cdot rel + \Theta(1)$, where $rel$ is the value of the relaxation. Note that this leaves open the possibility that the IG of these relaxations is unbounded when the optimal solution has small enough cost. In fact, it was already known that 
\cut{even for the unweighted \pvc,}
the standard 0-1 relaxation \eqref{equa: LP pvc} has IG at least $n/t$ 
while we also establish the same IG for the SDP of~\cite{HS02}.
\cut{
This by itself constitutes an interesting and rare phenomenon, in which the best approximate solution-costs, that natural convex-programming relaxations can witness, are far off from the best approximate solutions achieved by combinatorial algorithms (that could even rely on the relaxations). 
}

Very interestingly, the power of convex-programming for combinatorial optimization problems is not limited by the performance of the natural and static relaxations. A number of systematic procedures, known as lift-and-project (\lap) systems, 
have been proposed in order to reduce the IG of 0-1 LP relaxations $P \subseteq [0,1]^m$ (the reader should think of $P$ as the feasible region of a relaxation of some combinatorial problem).
The seminal works of Lov\'asz and Schrijver~\cite{LS91}, Sherali and Adams~\cite{SA90}, Lasserre~\cite{Las01}, and Parrilo~\cite{parrilo2003semidefinite} give such systematic methods (\iLS, \iLSp, \iSA, and \iLa\ respectively).\footnote{\iLSp\ and \iSA\ systems derive stronger relaxations than the \iLS\ system, while \iLSp, \iSA\ are incomparable. \iLa\ derives SDPs that are at least as strong as relaxations derived by any other system.} Starting with the polytope $P$, each of the 
systems derives a sequence (hierarchy) of relaxations $P^{(r)}$ for $P \cap \{0,1\}^m$ 
\cut{(corresponding to solutions to the combinatorial problem at hand)
satisfying
$
P \subseteq P^{(1)} \subseteq P^{(2)} \subseteq \ldots \subseteq P^{(m)} = conv\left( P \cap \{0,1\}^m \right).
$
That is, the sequence of relaxations $P^{(r)}$ }
that are nested, preserve the integral solutions of $P$, and $P^{(m)}$ is exactly the integral hull of $P$ (hence the IG of the last relaxation is 1 independently of the underlying objective). For these reasons, these systems are also known as hierarchies (of LP or SDP relaxations).  More importantly, if $P$ admits a (weak) separation oracle, then one can optimize a linear objective over the so-called level$-r$ relaxation $P^{(r)}$ of all methods but the \iLa\ system in time $m^{O(r)}$. For the \iLa\ system, one requires stronger conditions, see~\cite{o2017sos,MR3685820}.

In other words, all \lap\ 
systems constitute ``parameterized'' models of computation for attacking intractable combinatorial optimization problems. 
Even more interestingly, there are numerous combinatorial problems for which either \lap\ systems have given the best approximation algorithms known (with no matching combinatorial algorithms known), or with approximation guarantees matching the best combinatorial algorithms known. We refer the reader to~\cite{CM12} for a relatively recent survey. 

For this reason, a long line of research has been devoted in proving IG lower bounds for relaxations derived by \lap\ systems, while any such result is understood as strong evidence of the true inapproximability of the combinatorial problem at hand. 
At the same time, an $\alpha$ IG for level-$r$ relaxations derived by \lap\ systems implies that algorithms (for a restricted yet powerful model of computation) that run in time $m^{O(r)}$ cannot witness the existence of $\alpha$-approximate solutions to the combinatorial problem. It is notable that examples of integrality gaps for \lap\ systems that are way off from the best approximability known for a combinatorial optimization problem are quite rare.

\subsection{Our contributions \& Comparison to previous work}

\cut{
\subsection{Comparison to previous integrality gap lower bounds}
}

To the best of our knowledge, this is the first study of integrality gap lower bounds for lift-and-project tightenings of the natural 0-1 relaxation of \pvc. Our starting point is the standard LP relaxation~\eqref{equa: LP pvc} that has been used in all 2-approximation algorithms for weighted instances. Our goal is to derive strong integrality gap lower bounds for level-$r$ relaxations derived by the \iLSp, \iSA\ and \iSAp\ systems, where $r$ is as large as possible, and $t=O(n)$ (where $n$ is the number of vertices in the input graph). It is worthwhile noticing that there is a number of very strong IG lower bounds known for \vc\ in \lap\ systems, including IG of $2-\epsilon$, for every $\epsilon>0$, for 
level-$\Theta(n)$ \iLS\ LPs~\cite{STT07b},
level-$n^{\Theta(1)}$ \iSA\ LPs~\cite{CMM09},
level-$\Theta(\sqrt{\log / \log \log n})$ \iLSp\ SDPs~\cite{GMPT10},
level-$5$ \iSAp\ SDPs~\cite{BCGM11},
and IG of $7/6-\epsilon$ and 1.36 for level-$\Theta(n)$~\cite{Sch08} and level-$n^{\Theta(1)}$~\cite{Tul09} \iLa\ SDPs. Each of the aforementioned lower bounds 
imply directly the same IG lower bounds, for the same level relaxation and for the same system for~\eqref{equa: LP pvc} by a straightforward reduction. However for the magnitude of $t$ for \pvc\ for which we establish our results (roughly speaking for $t\leq n/2$), and in which the problem makes the transition from tractable to intractable, our IG lower bounds are superconstant and not just 2.

\cut{
The focus of our current work is the performance of \iLS, \iLSp, and \iSA\ systems on the standard relaxation~\eqref{equa: LP pvc} of \pvc, knowing that the IG of the latter relaxation is at least $n/t$ even for unweighted star-graphs~\cite{Mes09}. 
}
The majority of our results are negative. Our motivating observations are that (a) a simple graph instance is responsible for a $n/t$ IG of the SDP of \cite{HS02} (Proposition~\ref{prop: star fools halperins sdp}), on which the best algorithm know for \pvc\ is based and (b) the level-1 LP derived by the \iLS\ system (which is strictly weaker than the \iLSp\ and \iSA\ systems) solves the same instances exactly (Proposition~\ref{prop: LS1 solves star}). This is a remarkable example of a simple LP that outperforms, even in a specific instance, an SDP that has been used for the best algorithm for a combinatorial problem (the authors are not aware of another similar example). It is natural then to ask whether relaxations derived by \lap\ systems can witness existence of 2-approximate solutions to \pvc. We answer this question in the negative by proving strong IG lower bounds for all \lap\ systems that have been used for positive algorithmic results. For all these systems we show that as long as $n\geq 2r+ 2t+2$, the level-$r$ relaxations have integrality gap at least $\binom{n-2r}2/t\cdot n$. As an immediate corollary, we see that the integrality gap of the starting LP (which is at least $n/t$) remains $(1-\epsilon)\frac{n}t$ for level-$\Theta(n)$ LP and SDP relaxations. 
Note that our results could be also stated as rank lower bounds of a certain knapsack-type inequality (the one certifying a good IG). Many similar results have appeared in the literature, e.g.~\cite{Che07,CD01,kurpisz2016hardest,kurpisz2016sum,Lau03}, but they are all for polytopes that are of different structure than the partial vertex cover polytope. 

\cut{
Our results indicate that \lap-relaxations fail to reduce the IG of \eqref{equa: LP pvc}, even though the level-$n$ relaxations have integrality gap 1. 

As a byproduct we exhibit a very rare example of a combinatorial optimization problem that admits a constant approximation, still a sequence of very strong convex-relaxations (LPs and SDPs) fail to witness the same approximability. Even stronger than this, 
}

The above negative results bring up another rare phenomenon; for the family of tractable combinatorial optimization problems \pvc, for which $t=\Theta(1)$, \lap-relaxations have unbounded discrepancy; see also~\cite{kurpisz2016hardest,kurpisz2017unbounded} for similar results. 
This is in contrast to many combinatorial optimization problems, and in particular \vc, for which constant-level \lap-relaxations either have integrality gaps matching the best approximability or they even solve tractable variations of the problems. Finally, due to the approximation preserving reduction from \vc\ to \pvc~\cite{BB98}, when $t=n^{\Theta(1)}$, our results also imply that \lap\ systems applied on the \pvc\ standard polytope cannot yield new insights for the $\NP$-hardness inapproximability of \vc.
However, the aforementioned integrality gap lower bounds are tailored to the specific formulation. In other words, it is still possible that \lap\ systems might induce much better integrality gaps, given a different LP relaxation for the underlying combinatorial optimization problems. 

Our contributions are twofold. First, we establish strong IG \lap\ lower bounds for LPs defined over two types of variables. Lower bounds for \lap\ relaxations of such polytopes are very rare (the authors are aware only of one such result~\cite{KM14}). 
Second, we utilize a generic condition of solutions to LP relaxations that can fool a large family of PSD constraints (for a high level explanation of the condition see Section~\ref{sec: techniques}).

\subsection{Our techniques}\label{sec: techniques}

For our main results we employ some standard and generic techniques for constructing vector solutions for convex relaxations derived by the \iSA\ system. Then we utilize a condition (which was uncommon prior to an early version of the current work, see~\cite{GeorgiouL14}) 
special to our solution that allows us to argue that the same construction is robust against SDP tightenings. Our IG instance is the unweighted clique on $n$ vertices, which for all $t$, admits an optimal solution of cost 1. This IG construction suffers a decay that is proportional to $\binom{n-2r}{2}$. The decay with $r$ is unavoidable, at a high level, due to that level-$r$ relaxations solve accurately local subinstances induced by $r$ many elements corresponding to variables. 
Since our LP relaxation has edge variables, the removal of $r$ many edges induces a clique of $n-2r$ vertices. Since we still have $\binom{n-2r}{2}$ edges, each edge needs to be covered ``on average'' $t/\binom{n-2r}{2}$ fractional times. Due to the symmetry imposed in our solutions, this is also the contribution of each vertex in the objective. 

\cut{
\subsubsection{Establishing the \iLS\ and \iLSp\ lower bounds} 
}

\textbf{Establishing the \iSA\ IG lower bound:}
\cut{
Achieving \iSA\ lower bounds has been proven particular challenging, especially for polytopes that involve two types of linear variables, as in \eqref{equa: LP pvc}. 
}
A common and generic approach for constructing \iSA\ solutions is to use the probabilistic interpretation of the system, first introduced in~\cite{KV07}, and that is implicit in all our arguments of Section~\ref{sec: SA lower bound}. At a high level, the curse and the blessing of the \iSA\ system is that level-$r$ solutions are convex combinations of (LP feasible) vectors that are integral in any set of $r$ many variable-indices. These convex combinations can be interpreted as families of distributions of \textit{feasible} integral solutions for subsets of the input instance of size-$r$ (hence subsets of variables as well), that additionally enjoy the so-called \textit{local-consistency} property: distributions over different subinstances should agree on the solutions of the common sub-subinstance. Designing such probability distributions over sets of indices that also enclose the support of any constraint gives automatically a solution to the level-$r$ \iSA. Finding however such distributions is in general highly non trivial, especially when aiming for a big integrality gap. 

The previous recipe is not directly applicable to the \pvc\ polytope, as it has a defining facet that involves all edges of the input graph. This means that had we blindly tried to find families of probability distributions as described above, then we would have unavoidably defined distributions of feasible solutions in the integral hull. Our strategy is to deviate from the generic probabilistic approach, and focus first on satisfying constraints of the \iSA\ relaxation of relatively small support.
\cut{
(hence we will only need distributions over small sets of variable-indices). 
}

At a high level, the novelty of our approach is that we do not explicitly define locally consistent distributions of local 0-1 assignments, one for each subset of variables of bounded size, rather we achieve this implicitly. One of the advantages of our construction is that it is surprisingly simple. Specifically, we define a \textit{global} distribution of 0-1 assignments as follows: each of the vertices is chosen in the solution independently at random, and with negligible probability, and covered edges are those incident to at least one chosen vertex. 

The locally consistent distributions, that we need to associate each subset of variables $A$ with, are obtained by restricting the global distribution onto the subinstance induced by $A$. This trick can be thought as a vast generalization of the so-called correction-phase (or expansion recovery) that is common to all \iSA\ lower bounds, although it is sometimes hidden in the technicalities of the proofs (\cite{GM08} is a good example where the correction phase is made explicit). According to this trick, set $A$ is  effectively blown up (or ``corrected'') to a big enough superset $\barr A$ with certain structural properties. This allows for sampling almost uniformly at random over local 0-1 assignments (of variables in $\barr A$) that can be easily seen to induce consistent local distributions, whereas the same task seems to be impossible to be realized directly on $A$. Interestingly, $\barr A$ is the whole instance in our case. 

Our global distribution has a special property that it always satisfies all linear constraints of the \pvc\ polytope but the one demand-constraint, i.e. the constraint that requires $t$ many edges to be covered. In particular, the proposed vector solution is a convex combination of exponentially many solutions in the integral hull and of the outlier all-0 vector. In fact our global distribution assigns probability $1-o(1)$ to the latter vector, which is also responsible for the large integrality gap. 

Notably, there is no generic reason to believe that such a vector solution satisfies the almost global constraint of the \pvc\ polytope that involves all edges. To that end, we take advantage of the fact that we do \textit{not} need to define feasible solutions of the whole instance in every small subinstance. This means that if presented with a small subinstance of the input graph, we are allowed in principle to cover zero edges in that subgraph with positive probability, as long as we do cover $t$ many edges in the complement. That said, constraints of large support cannot be treated probabilistically with respect to the global distribution. Instead, we deal with such constraints almost algebraically (in contrast to the majority of \iSA\ consructions), as one would normally do for a standard LP. More specifically, we rely on the fact that when we condition on covering zero edges in a subclique of size at most $2r$, edges that do not touch this subclique are covered independently at random with significant probability compared to how many edges are left. Linearity of expectation then can prove for us that the demand constraint is indeed satisfied. 

\cut{
(We emphasize that the norm is to show \iSA\ lower bounds probabilistically, with a notable exception~\cite{CFG13}, as well as ~\cite{MS09,CGGS13} if the proofs therein are seen under a broader lens.) In that sense, our approach can be seen as a hybrid between the common probabilistic approach and the uncommon algebraic approach. 

Finally, all these ingredients need to be fine tuned to also give a large integrality gap. One of the advantages of our \iSA\ lower bound is that it is elegantly simple. The family of distributions that achieve all the above is the following; For any set $A$ of at most $r$ many vertices and edges, consider some subclique $Q_{2r}$ containing $A$. We decide to cover each edge of $Q_{2r}$ independently with probability $p$  (that needs to be carefully chosen), and for each edge that is to be covered, we do pay for both its endpoints. 
As a side note, such a solution is much different than the vector solutions for \iLS, and \iLSp, in that it assigns to edge variables and vertex variables different values. 
}

\textbf{Establishing IG lower bounds for SDP hierarchies:} Showing that our \iSA\ vector solution is robust against SDP tightenings is by construction very easy. The reason is that all SDP hierarchies (that have been used for positive algorithmic results), except the \iLa\ system, distinguish constraints between those imposed by the starting 0-1 relaxation, and that are \textit{always} linear, and PSD constraints that are valid for \textit{all} 0-1 assignments (independently of the starting relaxation). As a result, any IG lower bound for strong LP relaxations that is based on a solution that comes from a global distribution of 0-1 assignments immediately translates into the same IG for a series of SDP hierarhies. A natural question that is raised is whether such global distributions of 0-1 assignments can be used to fool strong LP relaxations (and we answer this in the positive as we explain above). The second question that we raise is whether our solution is robust also against Lasserre tightenings. We answer this in the negative in Section~\ref{sec: las section high level}.  However, we prove two weaker integrality gaps in Sections~\ref{sec: l1 la} and~\ref{sec: general la} against the Lasserre system using a very similar construction to what we use for the \iSA\ system -- a $O(\sqrt{n})$ bound for the Level-$1$ Lasserre system and we give a superconstant integrality gap using a similar construction that holds for $\Theta(n)$ levels of the Lasserre system.

\cut{
, and we propose the all-$2^r\cdot t/\binom{n-2r}{2}$ vector.
}

\section{Preliminaries }\label{sec: preliminaries}

We denote by $\bone_n$ the all-1 vector of dimension $n$, and we drop the subscript, whenever the dimension is clear from the context. Similarly, by all-$\alpha$ vector we mean the vector $\alpha \bone$. 
For a fixed set of indices $[m]:=\{1,\ldots,m\}$, we denote by $\powerset_r$ all subsets of $[m]$ of size at most $r$ (for the partial vertex cover polytope and for a graph $\graph$, we will use $[m]=V \cup E$). For some $y \in \reals^{\powerset_{r+1}}$, we denote by $\mathcal Y$ the so-called \textit{moment matrix} of $y$ that is indexed by $\powerset_1$ in the rows and by $\powerset_r$ in the columns, with $\mathcal Y_{A,B} = y_{A\cup B}$. In other words, $\mathcal Y \in \reals^{|\powerset_1| \times |\powerset_r|}$ whenever $y \in \reals^{\powerset_{r+1}}$, whereas $\mathcal Y$ is a square symmetric matrix if $r=1$. Finally, we denote by $\{\be_I\}_{I \in \powerset_r}$ the standard orthonormal basis of $\powerset_r$, so that $\mathcal Y \be_A$ is the column of $\mathcal Y$ indexed by set $A$. 

\subsection{Problem Definition \& and a Natural LP Relaxation}\label{sec: problem def and LP relaxation}

Given an integer $t$, and a graph $\graph$\ with vertex weights $w_i \in \reals_+$ for each $i \in V$, \pvc\ can be alternatively defined as the following optimization problem where variables $\{x_q\}_{q \in V\cup E}$ are further restricted to be integral. 
\begin{align}
    \min  \quad & \sum_{i \in V} w_i ~ x_i & \tag{\pvc-LP} \label{equa: LP pvc} \\
    \textup{s.t.} \quad  & x_i + x_j \geq x_{e},& \quad \forall e=\{i,j\} \in E \label{equa: edge constraints} \\
					& \sum_{e \in E} x_e \geq t \label{equa: demand constraint}\\
					& 0 \leq x_q \leq 1& \quad \forall q \in V \cup E \label{equa: box constraints}
\end{align}
Below we focus on uniform instances, in which $w_i=1$, for all $i \in V$. We denote the set of feasible solutions of the above LP as $P_t(G)$, or much simpler as $P_t$ when the underlying graph is clear from the context, and we call it the $t$-partial vertex-cover polytope. For each edge $e$, the reader should understand $x_e$ as the 0-1 indicator variable that says whether $e$ will be among the (at least) $t$ many that will be covered by some vertex, while for each vertex $i$, the 0-1 variables $x_i$ indicate whether vertex $i$ is chosen in the solution.

 \eqref{equa: LP pvc} is the starting point for the 2-approximation algorithm for \pvc\ in \cite{BB98}, and a $2-\Theta(1/d)$ approximation for unweighted instances, where $d$ the maximum degree of the input graph, in \cite{GKS04,Sri01}. Strictly speaking, the analysis that guarantees the 2-approximability is not relative to the performance of the LP for all instances, as in fact \eqref{equa: LP pvc} has an unbounded integrality gap.
\begin{observation}[Star-graph fools \eqref{equa: LP pvc}~\cite{Mes09}]\label{obs: star and LP}
Consider the unweighted star-graph $\graph$\ with $V={1,\ldots,n,n+1}$, and edges $\{n+1,i\} \in E$, for $i=1,\ldots,n$. The optimal solution to \pvc\ is 1, for every $t \in \naturals$. In contrast, consider the feasible solution to~\eqref{equa: LP pvc} that sets $x_e=x_{n+1}=t/n$ for all $e \in E$, and the rest of variables equal to 0. This gives a solution of cost $t/n$, hence the integrality gap of \eqref{equa: LP pvc} is at least $n/t$.
\end{observation}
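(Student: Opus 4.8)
The statement to prove is Observation~\ref{obs: star and LP}, which claims that for the unweighted star-graph on $n+1$ vertices, the optimal integral solution to \pvc\ has cost $1$ for every $t \in \naturals$, while a specific fractional solution achieves cost $t/n$, so the integrality gap of \eqref{equa: LP pvc} is at least $n/t$.

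The plan is to verify both halves directly, as this is an elementary observation rather than a deep result. First I would handle the integral optimum: in the star-graph every edge is incident to the center vertex $n+1$, so picking $S=\{n+1\}$ covers all $n$ edges, hence covers at least $t$ edges for any $t \leq |E| = n$; this is a feasible $t$-partial vertex cover of size $1$, and since any nonempty vertex set has size at least $1$ (and the empty set covers no edges, so is infeasible as $t \geq 1$), the optimum is exactly $1$. Second, I would check that the proposed fractional point is feasible for \eqref{equa: LP pvc}: set $x_{n+1} = x_e = t/n$ for all $e \in E$ and all other variables to $0$. The edge constraints \eqref{equa: edge constraints} read $x_{n+1} + x_i \geq x_e$, i.e. $t/n + 0 \geq t/n$, which holds with equality; the demand constraint \eqref{equa: demand constraint} reads $\sum_{e \in E} x_e = n \cdot (t/n) = t \geq t$, which holds; and the box constraints \eqref{equa: box constraints} hold since $0 \leq t/n \leq 1$ because $t \leq n = |E|$. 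The objective value is $\sum_{i \in V} w_i x_i = x_{n+1} = t/n$.

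Combining the two halves, the integrality gap on this instance is at least $\frac{\textup{OPT}}{\textup{LP}} = \frac{1}{t/n} = n/t$, which establishes the claimed bound. The only point requiring a moment's care is that the index set for vertices is $\{1,\ldots,n+1\}$ (so there are $n+1$ vertices and $n$ edges), and that we need $t \leq |E| = n$ — which is exactly the standing assumption $t \leq |E|$ from the problem definition — to ensure $t/n \leq 1$ and hence box-feasibility. There is no real obstacle here; the statement is a sanity-check observation whose purpose is to motivate the main results, and the proof is a one-line feasibility verification plus the trivial observation that one vertex dominates the star.
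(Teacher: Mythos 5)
Your verification is correct and is essentially the argument the paper itself intends: the observation is stated with the feasible fractional point $x_e=x_{n+1}=t/n$ given inline, and your check of constraints~\eqref{equa: edge constraints}, \eqref{equa: demand constraint}, \eqref{equa: box constraints} together with the trivial integral optimum of $1$ (the center covers all $n\geq t$ edges) is exactly the intended one-line justification. No gaps; the only care point you flag, $t\leq |E|=n$ for box-feasibility, is indeed the standing assumption of the problem.
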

For the algorithmic paradigm of LP-based algorithms (with performance analysis relative to the value of the relaxation), Observation~\ref{obs: star and LP} teaches us that natural LP relaxations may fail dramatically on simple graph instances. The reader should contrast this to the tractability of \pvc\ when $t$ is a constant, or when the input graph is a tree~\cite{CS13} (as this is the case in Observation~\ref{obs: star and LP}). Interestingly, we prove that this is also the case for a strong SDP relaxation of \pvc\ that has given its best approximation guarantee known. 

\ignore{
\begin{proposition}\label{prop: star fools halperins sdp}
For all $t\leq n/2$, the SDP of \cite{HS02} has integrality gap at least $n/t$ when the input is the star-graph of Observation~\ref{obs: star and LP}.
\end{proposition}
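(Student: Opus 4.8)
The plan is to exhibit an explicit feasible solution to the SDP of~\cite{HS02} on the star-graph instance whose objective value is $t/n$, matching (up to the constant $1$ of the integral optimum from Observation~\ref{obs: star and LP}) the LP bound and thus giving integrality gap at least $n/t$. The natural candidate is the ``scalar'' SDP solution obtained by lifting the LP solution of Observation~\ref{obs: star and LP}: set the scalar part to $x_{n+1}=x_e=t/n$ for every edge $e$ and $x_i=0$ for the leaves $i\le n$, and take all vectors to be (scaled) copies of a single unit vector $v_0$, with $v_q = x_q\, v_0$ for each variable $q$. First I would write down the SDP of~\cite{HS02} in full in Appendix~\ref{sec: Halperin's SDP IG}, identify which of its constraints are \emph{linear} constraints inherited from~\eqref{equa: LP pvc} — the edge constraints~\eqref{equa: edge constraints}, the demand constraint~\eqref{equa: demand constraint}, and the box constraints~\eqref{equa: box constraints} — and which are genuine \emph{semidefinite/vector} constraints (inner-product relations among the $v_q$ and $v_0$, Cauchy--Schwarz-type triangle inequalities, and the PSD-ness of the relevant Gram matrix).

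The verification then splits into two easy parts. For the linear constraints, the solution is literally the LP solution from Observation~\ref{obs: star and LP}, so feasibility is exactly what that observation already gives: $x_{n+1}+x_i = t/n \ge t/n = x_e$ for every edge $e=\{n+1,i\}$, $\sum_{e\in E} x_e = n\cdot(t/n) = t$, and $0\le t/n \le 1$ since $t\le n/2\le n$ (in fact $t\le n$ suffices here). For the vector constraints, because every vector is a nonnegative scalar multiple of the single unit vector $v_0$, all Gram-matrix entries are $\langle v_q,v_{q'}\rangle = x_q x_{q'}\ge 0$, the matrix is rank-one and hence PSD, and the inner-product identities such as $\langle v_0, v_q\rangle = x_q = \langle v_q, v_q\rangle/x_q$ (equivalently $\langle v_0,v_q\rangle = x_q$ and $\|v_q\|^2 = x_q^2 \le x_q$) hold automatically; any SDP-level ``triangle'' or consistency inequalities reduce to inequalities among the scalars $x_q\in[0,1]$, which are immediate. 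The objective is $\sum_{i\in V} w_i x_i = x_{n+1} = t/n$ (the center vertex being the only one with positive weight-contribution), versus integral optimum $1$, giving the claimed gap $n/t$.

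The only real subtlety — and the step I expect to require the most care — is making sure the concrete SDP of~\cite{HS02} does not contain a strengthening constraint that the rank-one scalar lift violates; in particular one must check whether that SDP imposes any constraint forcing some vector norm or inner product to be \emph{strictly larger} than the corresponding product of scalars (e.g. a constraint of the form $\|v_i\|^2 = x_i$ rather than $\le x_i$, or an $\ell_2^2$-metric condition on the leaf vectors). If the relaxation uses $\|v_q\|^2 = x_q$, then for the leaves we need $x_i=0$, which forces $v_i=0$ — consistent with our choice — and for the center and edges we need $v_{n+1},v_e$ to have squared norm $t/n$ rather than $(t/n)^2$; this is still easily accommodated by taking $v_{n+1}=v_e=\sqrt{t/n}\,v_0$ and $v_0$ a unit vector, and then re-checking that the homogenization/inner-product constraints $\langle v_0,v_q\rangle = x_q$ still hold, which they do since $\langle v_0,\sqrt{t/n}\,v_0\rangle = \sqrt{t/n}$ — so in that case one instead sets the scalar variables themselves to $\sqrt{t/n}$, or more carefully matches the exact normalization convention of~\cite{HS02}. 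Pinning down this normalization against their definitions, and confirming no additional valid-inequality constraint (of the kind one adds to tighten VC-type SDPs) is broken, is the bookkeeping that the appendix proof must do carefully; everything else is a one-line check.
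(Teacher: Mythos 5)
Your high-level strategy --- lift the fractional star solution $x_{n+1}=x_e=t/n$, $x_i=0$ of Observation~\ref{obs: star and LP} to an SDP solution of value $t/n$ --- is the same as the paper's, but the concrete construction you propose would fail on the actual SDP of \cite{HS02}, and the part you defer as ``bookkeeping'' is in fact the entire content of the proof. The relaxation \eqref{equa: SDP pvc} is not a $0$--$1$ Gram-matrix lift of \eqref{equa: LP pvc}: it has no edge variables at all, it is written in the $\pm 1$ convention with the requirement \eqref{equa: unit ball SDP} that every vector $\bv_i$, $i\in V\cup\{0\}$, be a \emph{unit} vector, the extent to which vertex $i$ is picked is encoded as $\tfrac12(1+\bv_0\cdot\bv_i)$, and the demand is the aggregated constraint \eqref{equa: sum of edges in SDP} over the quantities $3+\bv_0\cdot\bv_i+\bv_0\cdot\bv_j-\bv_i\cdot\bv_j$. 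Your primary plan ($\bv_q=x_q\,\bv_0$, hence $\bv_i=0$ for every leaf) and your fallback ($\bv_q=\sqrt{x_q}\,\bv_0$) both violate $\|\bv_i\|=1$ outright; more fundamentally, no family of vectors collinear with $\bv_0$ can witness the gap, since collinear unit vectors have inner product $\pm1$ with $\bv_0$ and therefore encode only integral assignments, whereas the fractional solution needs $\bv_0\cdot\bv_{n+1}=2t/n-1\in(-1,1)$.

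The missing step is the genuinely two-dimensional construction used in the appendix: take $\bv_0=(1,0)$, $\bv_i=-\bv_0$ for every leaf $i=1,\dots,n$, and
$\bv_{n+1}=\left(-1+2t/n,\ \sqrt{4t/n-4t^2/n^2}\right)$,
which is unit precisely because the second coordinate absorbs the slack. One then checks \eqref{equa: upper bound inner products SDP} and \eqref{equa: lower bound inner products SDP} (for each edge $\{i,n+1\}$ the relevant expressions evaluate to $-3+4t/n\leq 1$ and to $-1$), that each edge contributes exactly $4t/n$ to \eqref{equa: sum of edges in SDP} so the total is $4t\geq 4t$, and that the objective equals $\tfrac12\left(1+(-1+2t/n)\right)=t/n$ against the integral optimum $1$. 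Your verification of the ``linear'' part transfers in spirit (these are exactly the $\pm1$ translations of the LP checks), but as written your proposal never exhibits a feasible point of \eqref{equa: SDP pvc}, so it does not yet establish the proposition.
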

}

Given a graph $\graph$, and an integer $t$, the SDP relaxation introduced by Helperin and Srinivasan \cite{HS02} for the unweighted \pvc\ problem reads as follows.
\begin{align}
    \min  \quad & \frac12 \sum_{i \in V} \left( 1+\bv_0 \cdot \bv_i\right) & \tag{\pvc-SDP} \label{equa: SDP pvc} \\
    \textup{s.t.} \quad  
& \bv_0 \cdot \bv_i+\bv_0 \cdot \bv_j-\bv_i \cdot \bv_j \leq 1  ,& \quad \forall \{i,j\} \in E \label{equa: upper bound inner products SDP} \\
&\bv_0 \cdot \bv_i+\bv_0 \cdot \bv_j+\bv_i \cdot \bv_j \geq -1  ,& \quad \forall \{i,j\} \in E \label{equa: lower bound inner products SDP} \\
& \sum_{\{i,j\}\in E} \left(3+\bv_0 \cdot \bv_i+\bv_0 \cdot \bv_j-\bv_i \cdot \bv_j\right) \geq 4 t  ,& \quad  \label{equa: sum of edges in SDP}  \\ 
& \bv_i \in\reals^{|V|}, \|\bv_i\|=1,& \quad \forall i \in V \cup \{0\}  \label{equa: unit ball SDP}
\end{align}
The reader can verify that when restricted on integral solutions $\bv_i \in\reals^{1}$, \eqref{equa: SDP pvc} finds the optimal $t$-partial vertex cover $\{j \in V: ~\bv_i=\bv_0\}$ (note that when the vectors are unit dimensional, then each vector is equal to $\bv_0$ or to $-\bv_0$).  At the same time, it is an easy exercise that \eqref{equa: SDP pvc} is at least as strong \eqref{equa: LP pvc}, still both relaxations are fooled by the same bad integrality gap instance, as we prove next. 

\begin{proposition}\label{prop: star fools halperins sdp}
For all $t\leq n/2$, the integrality gap of \eqref{equa: SDP pvc} is at least $n/t$. 
\end{proposition}

\begin{proof}
We show that the star-graph of Observation~\ref{obs: star and LP} gives an integrality gap of $n/t$. Indeed, consider the following SDP vector solution in $\reals^2$: $\bv_0=-\bv_i=(1,0)$ for $i=1,\ldots,n$ and 
$$
\bv_{n+1} = \left(-1+2t/n, \sqrt{4t/n - 4t^2/n^2} \right).
$$
We examine now all constraints of~\eqref{equa: SDP pvc}. For every edge $\{i,n+1\}$ we have
\begin{align*}
\bv_0 \cdot \bv_i+\bv_0 \cdot \bv_{n+1}-\bv_i \cdot \bv_{n+1} 
-1 + (-1+2t/n) + (-1+2t/n) &= -3 + 4t/n \stackrel{t \leq n/2}{\leq} 1 \\
\bv_0 \cdot \bv_i+\bv_0 \cdot \bv_{n+1}+\bv_i \cdot \bv_{n+1} 
-1 + (-1+2t/n) - (-1+2t/n) &= -1 
\end{align*}
showing that \eqref{equa: upper bound inner products SDP} and \eqref{equa: lower bound inner products SDP} are satisfied. Next we check constraint ~\eqref{equa: sum of edges in SDP}
$$
\sum_{i=1}^n \left(3+\bv_0 \cdot \bv_i+\bv_0 \cdot \bv_{n+1}-\bv_i \cdot \bv_{n+1} \right) = n \cdot 4t/n = 4t. 
$$
Finally it is easy to see that all vectors above are unit, and that the value of the objective is indeed $t/n$, as required for an integrality gap of $n/t$. 
\end{proof}

We need to clarify that the statement of Proposition~\ref{prop: star fools halperins sdp} does contradict the fact that the best algorithm known for \pvc\ is based on~\eqref{equa: SDP pvc} and has performance strictly better than (but asymptotically equal to) 2. That should be of no surprise, since the approximation ratio achieved in~\cite{HS02} is due to an analysis relative to the performance of~\eqref{equa: SDP pvc} only for solutions of asymptotically large values. In particular, if $opt, sdp$ are the costs of the exact optimal solution and the optimal solution to~\eqref{equa: SDP pvc} respectively, the algorithmic analysis in~\cite{HS02} only relies on the highly non trivial relation 
$$opt\leq2\cdot sdp + 2$$
which allows for a large integrality gap, as indicated by Proposition~\ref{prop: star fools halperins sdp}. 

It is possible to show stronger integrality gaps for~\eqref{equa: SDP pvc}, especially when $t\geq n/2$, but this deviates from the subject of this work. The reader should keep that \eqref{equa: SDP pvc}, on which the best algorithm known for \pvc\ relies, cannot witness that a graph instance has a bounded solution, even with multiplicative error $n/t$, when $t\leq n/2$. That includes instances of \pvc\ that are tractable. Even more interestingly, and as we show in this work, a simple and natural linear program for which we prove strong negative results can solve the star-graph exactly. This constitutes a very unusual example of a specific instance of a combinatorial optimization problem for which a natural linear program outperforms (even in a single instance) an SDP that has been used in the best algorithm known for the same problem.


\subsection{Hierarchies of LP and SDP relaxations}\label{sec: Hierarchies of LP and SDP relaxations}

In this section we introduce families of LPs and SDPs derived by the so-called \iLS, \iLSp~\cite{LS91}, \iSA~\cite{SA90} and \iSAp\ systems. Starting with a polytope $P \subseteq [0,1]^m$, each of the systems derives a nested sequence of relaxations $\{P^{(r)}\}_{r=1,\ldots,m}$, such that $P^{(m)}=conv\left(P\cap\{0,1\}^m\right)$, while under mild assumptions one can optimize over $P^{(r)}$ in time $m^{O(r)}$. For an instance $\graph$\ of \pvc, our intention is to derive and study this sequence of relaxations starting with $P=P_t(G)$, i.e. the feasible region of the standard LP relaxation~\eqref{equa: LP pvc}, hence setting $|m|=|V| +|E|$. For the sake of simplicity, we adopt a unified exposition of the systems (see \cite{Lau03} for a more abstract exposition of lift-and-project systems).

For technical reasons, it is convenient to apply a standard homogenization to polytope $P$ as follows: variables $x_p$ are replaced by $\barr x_{\{p\}}$ and each constraint $a^Tx\geq b$ is replaced by  $a^T\barr x\geq b \barr x_\emptyset$. Adding the constraint $\barr x_\emptyset\geq 0$ along with the previous constraints define a cone that we denote by $K$. Clearly $K\cap\{\barr x_\emptyset =1\}$ is exactly polytope $P$. 
Next we define a sequence of LP and SDP refinements of an arbitrary 0-1 polytope, proposed by Lov\'asz and Schrijver~\cite{LS91}, and that is commonly known in the literature as the \iLS\ and \iLSp\ hierarchies. 

\begin{definition}[The \iLS\ system]~\label{def: LS}
Let $K^{(0)}:=K$ be a conified polytope $P \subseteq [0,1]^m$. The level-$r$ \iLS~tightening of $K^{(0)}$ is defined as the cone
$$
K^{(r)} = 
\left\{
x \in \reals^{\powerset_1}:~ \exists y \in \reals^{\powerset_2}
~\textrm{such that}~~
\begin{array}{ll}
\mathcal Y\be_\emptyset=x~\textrm{ and} \\
\forall i \in [m], ~\mathcal Y\be_{\{i\}}, \mathcal Y\left( \be_\emptyset - \be_{\{i\}} \right) \in K^{(r-1)}
\end{array}
\right\}
$$ 
The level-$r$ \iLS\ tightening $\mathcal N^{(r)}(P)$ of $P$ is obtained by projecting $K^{(r)}$ onto $x_\emptyset = 1$, i.e. $\mathcal N^{(r)}(P) = K^{(r)} \cap \{ x \in \reals^{\powerset_1}:~ x_\emptyset =1\} $. 
\end{definition}

\begin{definition}[The \iLSp\ system]~\label{def: LS+}
Let $K^{(0)}:=K$ be a conified polytope $P \subseteq [0,1]^m$. The level-$r$ \iLSp~tightening of $K^{(0)}$ is defined as the cone
$$
K^{(r)}_+ = 
\left\{
x \in \reals^{\powerset_1}:~ \exists y \in \reals^{\powerset_2}
~\textrm{such that}~~
\begin{array}{ll}
\mathcal Y \psd \bzero, ~~\mathcal Y\be_\emptyset=x~\textrm{ and} \\
\forall i \in [m], ~\mathcal Y\be_{\{i\}}, \mathcal Y\left( \be_\emptyset - \be_{\{i\}} \right) \in K^{(r-1)}_+
\end{array}
\right\}
$$ 
The level-$r$ \iLSp\ tightening $\mathcal N^{(r)}_+(P)$ of $P$ is obtained by projecting $K^{(r)}_+$ onto $x_\emptyset = 1$. 
\end{definition}

The intuition of the technical Definition~\ref{def: LS+} is simple, at least for the level-1 relaxation; multiply each constraint of polytope $P$ by degree-1 polynomials $x_i, 1-x_i$ (for all $i$), and after expanding the quadratic expressions, substitute $x_i \cdot x_j$ by a brand new linear variable $y_{\{i,j\}}$, effectively simulating the identity $x_i^2=x_i$ which is valid in $P\cap \{0,1\}^m$. For example asking that $\mathcal Y\left( \be_\emptyset - \be_{\{i\}} \right) \in K^{(0)}$ is the same as multiplying all constraints of $P$ by $1-x_i$, and after linearizing as described above, and asking that the linear system is feasible. Therefore, the vectors $y \in \reals^{\powerset_2}$ of Definition~\ref{def: LS+} are meant to simulate monomials of degree at most 2, whereas the corresponding moment matrix $\mathcal Y$ for integral solutions $\barr x$ is simply the rank 1 positive definite matrix $\barr x \barr x^T$, hence the valid constraint $\mathcal Y\psd \bzero$. 

Next we introduce the \iSA\ system defined by Sherali and Adams~\cite{SA90} that derives a sequence of LP relaxations (and not SDP relaxations). 
\begin{definition}[The \iSA\ system]~\label{def: SA}
Let $K$ be a conified polytope $P \subseteq [0,1]^m$. The level-$r$ \iSA\ tightening of $K$ is defined as the cone
$$
M^{(r)} = 
\left\{
x \in \reals^{\powerset_1}:~ \exists y \in \reals^{\powerset_{r+1}}
~\textrm{such that}~~
\begin{array}{ll}
\mathcal Y\be_\emptyset=x\textrm{, and} \\
\forall Y,N~\textrm{wtih}~Y\cup N \in \powerset_r, 
~\mathcal Y \sum_{\emptyset \subseteq T \subseteq N }(-1)^{|T|}\be_{Y\cup T}\in K
\end{array}
\right\}
$$ 
The level-$r$ \iSA\ refinement (tightening) $\mathcal S^{(r)}(P)$ of $P$ is obtained by projecting $M^{(r)}$ onto $x_\emptyset = 1$, i.e. $\mathcal S^{(r)}(P) = M^{(r)} \cap \{ x \in \reals^{\powerset_1}:~ x_\emptyset =1\} $.
\end{definition}
Occasionally we abuse notation and we treat $\mathcal N^{(r)}_+(P), \mathcal S^{(r)}(P)$ as subsets of $[0,1]^m$, instead of $\{x \in [0,1]^{m+1}: x_\emptyset =1\}$. Also,  relaxations derived by \iLSp\ and \iSA\ are in principle incomparable. 

The intuition behind the technical Definition~\ref{def: SA} is as follows; multiply each constraint of polytope $P$ by dergee-$r$ polynomials of the form $\prod_{i \in Y}x_i \prod_{j \in N}(1-x_j)$, for some sets $Y \cup N \in \powerset_r$. After expanding the high degree polynomial expressions, substitute $\prod_{i \in A} x_i$ by a brand new linear variable $y_{A}$, effectively simulating the identity $x_i^k=x_i$ for all $k=1,\ldots,r+1$, which is valid constraint in $P\cap \{0,1\}^m$. For example note that by expanding and linearizing $\prod_{i \in Y} x_i\prod_{j \in  N}(1-x_j)$ we obtain $\sum_{\emptyset \subseteq T \subseteq N }(-1)^{|T|}y_{Y\cup T}$, hence the seemingly complicated sum in the definition of $M^{(r)}$ above. 

For the reader familiar with \lap\ systems, it is easy to see that level-1 \iSA\ tightening coincides with the so-called level-1 \ls-LP tightening.
Next we show that this seemingly weak LP solves the star graph. 

\begin{proposition}\label{prop: LS1 solves star}
Let $G$ be the star graph of Observation~\ref{obs: star and LP}. Then the level-1 \iSA\ 
tightening of $P_t(G)$ has integrality gap 1. 
\end{proposition}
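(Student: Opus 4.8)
The plan is to show that the optimum of $\mathcal S^{(1)}(P_t(G))$ is exactly the integral optimum, which is $1$ (the center vertex $n+1$ covers all $n\geq t$ edges). The inequality ``$\le 1$'' is immediate: the $0$-$1$ point with $x_{n+1}=1$, $x_e=1$ for every edge $e$, and $x_i=0$ for every leaf $i$ lies in $P_t(G)\cap\{0,1\}^m$, hence in $\mathcal S^{(1)}(P_t(G))$ since \lap\ systems preserve integral points, and it has objective $1$. The entire content is the reverse inequality: \emph{every} $x\in\mathcal S^{(1)}(P_t(G))$ satisfies $\sum_{q\in V}x_q\geq 1$. So I would fix such an $x$ together with a witness $y\in\reals^{\powerset_2}$, write $c:=n+1$ for the center and $e_i:=\{c,i\}$ for the $i$-th edge, and argue from a single, well-chosen conditioning.

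The one idea needed is to examine the ``conditioning on the center being \emph{outside} the cover'' vector $v:=\mathcal Y(\be_\emptyset-\be_{\{c\}})$, which by Definition~\ref{def: SA} (the instance $Y=\emptyset$, $N=\{c\}$) lies in the homogenization cone $K$ and therefore satisfies the homogenized versions of \eqref{equa: edge constraints}, \eqref{equa: demand constraint}, \eqref{equa: box constraints}. Its coordinates are $v_A=y_A-y_{A\cup\{c\}}$, so $v_\emptyset=1-x_c$, $v_{\{i\}}=x_i-y_{\{c,i\}}$, $v_{\{e_i\}}=x_{e_i}-y_{\{c,e_i\}}$, and — the crucial point — $v_{\{c\}}=y_{\{c\}}-y_{\{c\}}=0$, because $\{c\}\cup\{c\}=\{c\}$. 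Feeding $v$ into the homogenized edge constraint $x_i+x_c\geq x_{e_i}$ and using $v_{\{c\}}=0$ kills the center term, leaving for every leaf $i$
\[ x_i-y_{\{c,i\}}\ \geq\ x_{e_i}-y_{\{c,e_i\}}, \]
while feeding $v$ into \eqref{equa: demand constraint} gives $\sum_{i=1}^n\bigl(x_{e_i}-y_{\{c,e_i\}}\bigr)\geq t(1-x_c)$.

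Summing the $n$ edge inequalities and chaining with the demand inequality yields $\sum_{i=1}^n(x_i-y_{\{c,i\}})\geq t(1-x_c)$. Each $y_{\{c,i\}}$ is nonnegative (apply the homogenized constraint $x_i\geq 0$ to $\mathcal Y\be_{\{c\}}\in K$, whose $\{i\}$-coordinate is $y_{\{c,i\}}$), so discarding these terms gives $\sum_{i=1}^n x_i\geq t(1-x_c)$. Finally, using $0\leq x_c\leq 1$ (box constraints on $x=\mathcal Y\be_\emptyset\in K$) and $t\geq 1$,
\[ \sum_{q\in V}x_q\ =\ \sum_{i=1}^n x_i+x_c\ \geq\ t(1-x_c)+x_c\ =\ t-(t-1)x_c\ \geq\ 1. \]
Combined with the ``$\le 1$'' direction, this makes the integrality gap exactly $1$. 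I do not expect a real obstacle here: the computation is short, and the only thing that needs to be spotted is that a \emph{single} conditioning (on the center's absence), rather than the full family of locally consistent distributions, already suffices — precisely because it forces $v_{\{c\}}=0$ and thereby turns the otherwise useless edge constraints into a per-leaf lower bound that, aggregated against \eqref{equa: demand constraint}, beats the integral cost. The remaining work (tracking the homogenization identity $y_{\{c,c\}}=y_{\{c\}}$, the sign of the $y_{\{c,i\}}$, and the elementary inequality using $t\geq 1$ and $x_c\le 1$) is routine.
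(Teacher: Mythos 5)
Your proof is correct and is essentially the paper's own argument: the paper also considers the single conditioning $\mathcal Y(\be_\emptyset-\be_{\{n+1\}})$, sums the homogenized edge constraints, chains with the homogenized demand constraint, discards the nonnegative terms $y_{\{n+1,i\}}$, and finishes with $a+(1-a)t\geq 1$ using $x_{n+1}\leq 1$ and $t\geq 1$; your $v$, $y_{\{c,i\}}$, $y_{\{c,e_i\}}$ are exactly the paper's $\barr\bb$, $\barr\bd$ blocks in vector notation.
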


\begin{proof}
Let $x$ be a vector in the level-1 \iSA\ tightening of $P_t(G)$, and let $y$ be its moment matrix $\mathcal Y$ as in Definition~\ref{def: SA}. Let $\bb, \barr{\bb}, \bd, \barr{\bd} \in \reals^n$ and $a\in \reals$ be such that
$\mathcal Y\be_\emptyset = \left( 1, \bb^T, a, \bd^T \right)^T$ and 
$\mathcal Y\be_{\{n+1\}} = \left(a, {\barr{\bb}}^T, a, {\barr{\bd}}^T \right)^T$, where we explicitly assume that the list of indices has first all vertices (with the center being last), followed by all edges. Note that with this terminology, the value of the objective for such a solution is $a+ \bone^T_n \bb$, which we need to compare to $opt=1$. 

Next we focus on $\mathcal Y(\be_\emptyset - \be_{\{n+1\}})$ that satisfies all homogenized constraints of $P_t(G)$, and in particular constraints~\eqref{equa: edge constraints}
of edges $\{n+1,i\}$, $i=1,\ldots,n$, which require that $\bb-\barr \bb \geq \bd-\barr \bd$. Similarly, constraint~\eqref{equa: demand constraint} of $P_t(G)$ 
implies that $\bone^T_n (\bd-\barr \bd) \geq (1-a)t$. Therefore
$$
a+ \bone^T_n \bb \geq a + \bone_n^T (\bd-\barr \bd) \geq a + (1-a)t \geq 1 = opt.
$$
\end{proof}

An alternative proof of  Proposition~\ref{prop: LS1 solves star} follows by using the conditioning property according to which any vector solution in level-1 \iSA\ projected space is a convex combination of feasible solutions to the original LP that are integral in any index. Choosing as index the center of the star implies that any feasible solution to level-1 \iSA\ is a convex combination of a solution with cost 1 and a solution with cost $t$. 

Recall that by Proposition~\ref{prop: star fools halperins sdp} the star graph is also responsible for a $n/t$ integrality gap for the SDP of~\cite{HS02}, i.e. the relaxation which the best algorithm known for \pvc\ is based on. The surprising conclusion from Proposition~\ref{prop: LS1 solves star} is that a simple LP that one can derive systematically from $P_t(G)$ outperforms that particular SDP for a specific instance. This is in contrast to other known examples of level-$\Theta(m)$ \iLS\ tightenings that are strictly weaker than natural and static SDP relaxations. 
Finally, it is worthwhile mentioning that we do not know whether constant-level \lap\ tightenings of \eqref{equa: LP pvc} derive the SDP of~\cite{HS02}.

\cut{
, e.g. for \textsc{Max-Cut} (see~\cite{GW95,STT07b}). 
}

For algorithmic purposes, a number of \iSA\ variants have been proposed that give rise to hierarchies of SDPs (see \cite{AT13} for a list of them). The simplest variation, and the one that has resulted surprisingly strong positive results, is usually referred as the mixed hierarchy. This system, that we denote here by \iSAp\ imposes an additional PSD constraint. 

\begin{definition}[The \iSAp\ system]~\label{def: SAp}
Let $K$ be a conified polytope $P \subseteq [0,1]^m$. The level-$r$ \iSAp\ tightening of $K$ is defined as the refinement of cone $M^{(r)}$, as in Definition~\ref{def: SA}, where the $(m+1)$-leading principal minor of the moment matrix $\mathcal Y$, i.e. the principal minor of $\mathcal Y$ that is indexed by sets of variables of size at most 1, is PSD. 
\end{definition}

Next we give a formal definition of the \iLa\ system, which is a refinement of the \iSAp\ system. 

\begin{definition}[The \iLa\ system]~\label{def: La}
Consider some polytope $P \subseteq [0,1]$. 
The so-called level-$r$ \iLa\ SDP is a collection of PSD constraints on vectors $y$ indexed by $\powerset_{2r+2}$.
For each $y$, its \iLa-moment matrix $\mathcal Z$ is indexed in the rows and in the columns by $\powerset_{r+1}$, such that $\mathcal Z_{A,B}=y_{A\cup B}$. For each constraint $\sum_i \alpha_i^{(l)}x_i - \beta^{(l)} \geq 0$ of $P$, its slack moment matrix $\mathcal Z^{(l)}$ is indexed by $\powerset_{r}$, such that $\mathcal Z^{(l)}_{A,B}=\sum_i \alpha_i^{(l)} y_{A\cup B \cup \{i\}} - \beta^{(l)} y_{A\cup B}$. Then the level-$r$ \iLa\ SDP requires that all  matrices $\mathcal Z$ and $\{\mathcal Z^{(l)}\}_l$ are PSD (constraints that are valid for the integral hull of $P$).
The level-$r$ \iLa\ tightening of $P$ is the collection of projected vectors $y$ onto indices in $\powerset_{1}$.
\end{definition}
 Notably, the PSDness of proper principal minors of matrices $\mathcal Z$ and $\{\mathcal Z^{(l)}\}_l$ in the above definition is equivalent to the level-$r$ \iSA\ linear constraints~\cite{Lau03}. As such, the level-$r$ \iLa\ SDP is at least as strong as the level-$r$ \iSA\ LP. 
Level-$r$ SDPs derived by the \iSAp\ and \iLSp\ systems are not comparable. 

By the generic algorithmic properties common to \iLSp, \iSA, \iSAp\ and \iLa\ systems, and for the \pvc\ polytope, it is immediate that for any graph $\graph$\, the level-($|V|+|E|$) relaxations have integrality gap 1. However, from the proof of convergence from all systems, it easily follows that vectors in level-$r$ relaxations satisfy \textit{any} constraint that is valid for the integral hull of $P_t(G)$ and that has support at most $r$. If $opt$ denotes the optimal value for $\graph$\, then $\sum_{i \in V} x_i \geq opt$ is a constraint valid for every integral solution with support $|V|$. Hence, level-$|V|$ LPs or SDPs derived by \iSA, \iLSp\ and \iSAp\ systems can solve any \pvc\ instance exactly. Can level-$r$ relaxations close the unbounded inegrality gap of $P_t(G)$ as exhibited in Observation~\ref{obs: star and LP}, for $r=o(|V|)$? We answer this question in the negative in the next sections by proving strong integrality gaps for superconstant level LP and SDP relaxations. As a byproduct, we show this way that LPs and SDPs that give rise to algorithms that run in superpolynomial time cannot solve to any good proximity even the tractable combinatorial problem \pvc\, where $t = \Theta(1)$.


\section{IG lower bounds for the \sa\ LP system}\label{sec: SA lower bound}
This section is devoted in proving one of our main results. 
\begin{theorem}\label{thm: sa ig lower bound with parameters}
Let $n,r,t$ be integers with $n \geq 2r+2t+2$. Then the integrality gap of the level-$r$ \iSA-tightening of~\eqref{equa: LP pvc} on graphs with $n$ vertices is at least $\binom{n-2r}{2}/t\cdot n$. 
\end{theorem}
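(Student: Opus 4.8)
The plan is to take the complete graph $K_n$ as the integrality-gap instance. Since $n\ge 2r+2t+2$ implies $n-1\ge t$ (and $t\le\binom n2=|E|$), a single vertex of $K_n$ already covers $t$ edges, so the optimal \pvc\ cost on $K_n$ is $opt=1$; hence it suffices to exhibit a point of the level-$r$ \iSA-tightening of $P_t(K_n)$ whose objective value is at most $tn/\binom{n-2r}{2}$. Following the strategy sketched in Section~\ref{sec: techniques}, I will build this point from a \emph{single} global product distribution of $0/1$ assignments, rather than from explicitly constructed locally consistent distributions.

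Concretely, set $q:=t/\binom{n-2r}{2}$; note $q\in(0,1]$ because $n-2r\ge 2t+2$ forces $\binom{n-2r}{2}\ge t$. Let $S\subseteq V$ be random, each vertex placed in $S$ independently with probability $q$, and let $X^S$ be the $0/1$ vector with $X^S_i=[\,i\in S\,]$ for $i\in V$ and $X^S_{\{i,j\}}=[\,\{i,j\}\cap S\neq\emptyset\,]$ for $\{i,j\}\in E$. Define $y\in\reals^{\powerset_{r+1}}$ by $y_A:=\Pr_S\!\big[\,X^S_q=1\text{ for all }q\in A\,\big]$ (so $y_\emptyset=1$) and put $x:=\mathcal Y\be_\emptyset$. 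To show $x\in\mathcal S^{(r)}(P_t(K_n))$ I will verify the condition of Definition~\ref{def: SA}: for $Y,N$ with $Y\cup N\in\powerset_r$, the standard inclusion--exclusion identity shows that $z:=\mathcal Y\sum_{\emptyset\subseteq T\subseteq N}(-1)^{|T|}\be_{Y\cup T}$ has coordinates $z_{\{q\}}=\Pr_S[\,X^S_q=1,\ X^S_p=1\ \forall p\in Y,\ X^S_p=0\ \forall p\in N\,]$ and $z_\emptyset=\mu$, the probability of the conditioning event $\mathcal E$. If $\mu=0$ then $z=\bzero\in K$; if $\mu>0$ then $z=\mu\,(z/\mu)$, so it remains to check that the conditional marginal vector $z/\mu$, with entries $\Pr_S[X^S_q=1\mid\mathcal E]$, satisfies constraints~\eqref{equa: edge constraints}, \eqref{equa: demand constraint}, \eqref{equa: box constraints} of $P_t(K_n)$. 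The box constraints are trivial, and each edge constraint $\Pr[X^S_i=1\mid\mathcal E]+\Pr[X^S_j=1\mid\mathcal E]\ge\Pr[X^S_{\{i,j\}}=1\mid\mathcal E]$ follows from a union bound, for \emph{every} choice of $\mathcal E$.

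The step I expect to be the crux is the demand constraint~\eqref{equa: demand constraint}, the only facet of $P_t$ with large support, which cannot be argued ``generically''. The key observation is that $\mathcal E$ depends only on the independent bits $\{[k\in S]:k\in W\}$, where $W$ is the set of vertices lying in, or incident to an edge in, $Y\cup N$, so $|W|\le 2|Y\cup N|\le 2r$. Hence for each of the $\binom{n-|W|}{2}\ge\binom{n-2r}{2}$ edges $e$ with both endpoints outside $W$, the bits determining $X^S_e$ are independent of $\mathcal E$, so $\Pr[X^S_e=1\mid\mathcal E]=2q-q^2\ge q$. Summing over just these edges, $\sum_{e\in E}\Pr[X^S_e=1\mid\mathcal E]\ge\binom{n-2r}{2}\,q=t$, which is exactly the demand constraint. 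Therefore $x\in\mathcal S^{(r)}(P_t(K_n))$; its objective value is $\sum_{i\in V}x_i=nq=tn/\binom{n-2r}{2}$, and since $opt=1$ the integrality gap is at least $\binom{n-2r}{2}/(tn)$, as claimed. The remaining work is bookkeeping: confirming that $Y,N$ indexing vertex/edge events really touch at most $2r$ vertices, that $q\le 1$ so that $2q-q^2\ge q$, and that the hypothesis $n\ge 2r+2t+2$ covers the degenerate ranges (so that $\binom{n-2r}{2}$ is well-defined and positive, $P_t(K_n)\neq\emptyset$, and the $\mu=0$ case is handled).
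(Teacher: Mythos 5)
Your proposal is correct and follows essentially the same route as the paper: the $n$-clique with the global product distribution at $p=t/\binom{n-2r}{2}$, the probabilistic reading of the \iSA\ constraints, and the key observation that the at least $\binom{n-2r}{2}$ edges untouched by $Y\cup N$ are covered independently of the conditioning event, which yields the demand constraint by linearity of expectation. Your explicit handling of the $\mu=0$ case and the union-bound phrasing of the edge constraints are only cosmetic differences from the paper's argument.
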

For this we fix a clique $\graph$ on $n$ vertices, along with $r,t$ such that $n \geq 2r+ 2t+2$. 
We start by presenting Random Process~\ref{def: distribution}, that defines a distribution of 0-1 assignments for variables of the polytope $P_t(G)$.

\begin{algorithm}
\caption{(Definition of distribution $\distr_p$)}
\label{def: distribution}
\begin{algorithmic}[1]
\REQUIRE A fixed $p \in [0,1]$.
\FOR{$i \in V$}
\STATE Independently at random, set $x_i=1$ with probability $p$
\ENDFOR
\FOR{$e \in  E$}
\STATE Set $x_e$ equal to 1 as long as $e$ is incident to some $i$ for which $x_i=1$, and otherwise to 0. 
\ENDFOR
\ENSURE Distribution $\distr_p$ induced by the experiment above.
\end{algorithmic}
\end{algorithm}

We are ready to propose a vector solution $y\in \reals^{\powerset_{r+1}}$ to the level-$r$ \iSA\ tightening of $P_t(G)$. For $A \in \powerset_{r+1}$ (with ground set $V\cup E$), and for each $q \in A$, let $X_q$ be the random variable which equals 1 if $x_q=1$ in the random experiment of $\distr_{p}$, and 0 otherwise. For all such $A \subseteq V\cup E$, we define 
\begin{equation}\label{equa: def of SA solution}
y_A := \Exp{\distr_{p}}{\prod_{q \in A} X_q} = \pr{\distr_{p}}{\forall q \in A, ~x_q = 1} 
\end{equation}
where the last equality is due to that $X_q$ are 0-1 variables. In particular, this means that for all $i \in V$ and $f \in E$ we have
\begin{equation}\label{equa: singleton SA values}
y_{\{i\}} = p, ~~y_{\{f\}} = 2p-p^2,
\end{equation}
where $2p-p^2$ is the probability that at least one endpoint of edge $f$ is chosen, minus the probability that both are chosen (i.e the probability that edge $f$ is covered). The following is a standard observation that is used in many \iSA\ lower bounds, and that makes explicit the probabilistic interpretation of the system. 
\begin{lemma}\label{lem: local probabilities}
For $Y \cup N \in \powerset_{r+1}$, let $w_{Y,N} := \sum_{\emptyset \subseteq T \subseteq N }(-1)^{|T|} y_{Y \cup T}$. Then  \\
$
w_{Y,N} = \pr{\distr_{p}(Y \cup N)}{ \forall q \in Y, X_q=1,~\&~\forall q' \in N, X_{q'}=0}.
$
\end{lemma}
\begin{proof}
\begin{align}
\sum_{\emptyset \subseteq T \subseteq N }(-1)^{|T|} y_{Y\cup T}
& 
=
\sum_{\emptyset \subseteq T \subseteq N }(-1)^{|T|} \Exp{\distr_{p}}{\prod_{q \in Y\cup T} X_q} \notag \\
&=
\Exp{\distr_p}{\sum_{\emptyset \subseteq T \subseteq N }(-1)^{|T|} \prod_{q \in Y\cup T} X_q} \tag{Linearity of expectation}\\
&=
\Exp{\distr_p}{\prod_{q \in Y} X_q \prod_{q' \in N} \left( 1- X_{q'} \right) } \tag{$X_q \in \{0,1\}$}\\
& =
\pr{\distr_p}{ \forall q \in Y, x_q=1,~\&~\forall q' \in  N, x_{q'}=0}
\notag
\end{align}
\end{proof}

We can now prove that $y$ is solution to the level-$r$ \iSA\ polytope of \pvc, for a proper choice of $p$. 
\begin{lemma}\label{lem: main SA pvc}
For the complete graph $\graph$\ on $n$ vertices, and for all $r, t$ with $n \geq 2r+2t+2$, let $y\in \reals^{\powerset_{r+1}}$ be as in~\eqref{equa: def of SA solution}, where $p=t/\binom{n-2r}{2}$. Then $y \in S^{(r)}(P_t(G))$. 
\end{lemma}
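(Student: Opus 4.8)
The plan is to verify directly that $y$ satisfies the defining conditions of the level-$r$ \iSA\ cone $M^{(r)}$ of Definition~\ref{def: SA}, and then observe that $\mathcal Y\be_\emptyset$ lands on $\{x_\emptyset=1\}$ since $y_\emptyset=\pr{\distr_p}{\textrm{empty conjunction}}=1$. The substance is to show, for every pair $Y,N$ with $Y\cup N\in\powerset_r$, that the vector $z:=\mathcal Y\sum_{\emptyset\subseteq T\subseteq N}(-1)^{|T|}\be_{Y\cup T}$ lies in the conified polytope $K$ of $P_t(G)$. First I would read off the coordinates of $z$: its $A$-entry (for $A\in\powerset_1$) equals $\sum_{T\subseteq N}(-1)^{|T|}y_{A\cup Y\cup T}$, and a one-line telescoping argument — splitting the sum over $T$ according to whether a fixed element of $N$ belongs to $T$ — gives $z_\emptyset=w_{Y,N}$, $z_{\{q\}}=w_{Y,N}$ for $q\in Y$, $z_{\{q\}}=0$ for $q\in N$, and $z_{\{q\}}=w_{Y\cup\{q\},N}$ for $q\notin Y\cup N$, with $w_{Y,N}$ as in Lemma~\ref{lem: local probabilities}. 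If $w_{Y,N}=0$, then since $0\le w_{Y\cup\{q\},N}\le w_{Y,N}$ (immediate from the probabilistic reading of Lemma~\ref{lem: local probabilities}) we get $z=\bzero\in K$. Otherwise, writing $\mathcal E_{Y,N}$ for the event ``$X_q=1$ for all $q\in Y$ and $X_{q'}=0$ for all $q'\in N$'' under $\distr_p$, Lemma~\ref{lem: local probabilities} lets me write $z=w_{Y,N}\cdot(1,x^*)$, where $x^*\in[0,1]^{V\cup E}$ is given by $x^*_q=\pr{\distr_p}{X_q=1\mid \mathcal E_{Y,N}}$ (this identification is uniform in $q$: the conditional probability is $1$ when $q\in Y$ and $0$ when $q\in N$). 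Since $K$ is a cone and $w_{Y,N}\ge 0$, it then suffices to prove $x^*\in P_t(G)$.

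The box constraints for $x^*$ hold because its entries are probabilities, and the edge constraints $x^*_i+x^*_j\ge x^*_e$ follow from the union bound together with the fact that $X_e=X_i\vee X_j$ for $e=\{i,j\}$, so that $x^*_e=\pr{\distr_p}{X_i\vee X_j\mid \mathcal E_{Y,N}}\le x^*_i+x^*_j$. The crux — and the step I expect to be the main obstacle, because here the precise combinatorics of the conditioning must be controlled rather than handled abstractly through local consistency — is the demand constraint $\sum_{e\in E}x^*_e\ge t$.

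For this I would introduce the set $W\subseteq V$ of vertices that are either named in $Y\cup N$ or are an endpoint of an edge of $Y\cup N$; since $Y\cup N\in\powerset_r$, one has $|W|\le 2|(Y\cup N)\cap V|+2|(Y\cup N)\cap E|\le 2r$. The event $\mathcal E_{Y,N}$ is measurable with respect to $\{X_v:v\in W\}$, so conditioning on it leaves $\{X_v:v\notin W\}$ i.i.d.\ Bernoulli$(p)$; hence every edge $e=\{i,j\}$ of the clique with $i,j\notin W$ satisfies $x^*_e=1-(1-p)^2=2p-p^2$. There are at least $\binom{n-2r}{2}$ such edges, so $\sum_{e\in E}x^*_e\ge \binom{n-2r}{2}(2p-p^2)=t(2-p)\ge t$, using $p\le 1$; and $p=t/\binom{n-2r}{2}\le 1$ because the hypothesis $n-2r\ge 2t+2$ forces $\binom{n-2r}{2}\ge\binom{2t+2}{2}\ge t$. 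This yields $x^*\in P_t(G)$, hence $z\in K$ for all admissible $Y,N$, so $y$ witnesses that $\mathcal Y\be_\emptyset\in M^{(r)}$, and since $y_\emptyset=1$ we conclude $y\in \mathcal S^{(r)}(P_t(G))$. Note that the hypothesis $n\ge 2r+2t+2$ enters only through $p\le 1$ and through having enough edges avoiding $W$; no family of locally consistent distributions is constructed by hand, which is exactly the simplification highlighted in the introduction.
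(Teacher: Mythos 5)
Your proposal is correct and follows essentially the same route as the paper: both rest on the probabilistic reading of $w_{Y,N}$ (Lemma~\ref{lem: local probabilities}), verify the edge and box constraints from $X_e = X_i \vee X_j$ and monotonicity of probabilities, and handle the demand constraint by isolating the $\geq \binom{n-2r}{2}$ edges untouched by $Y \cup N$, whose coverage probabilities remain independent Bernoulli-based and sum to at least $t$. The only difference is cosmetic bookkeeping: you normalize by $w_{Y,N}$ to phrase things as a conditional distribution $x^*$ (correctly splitting off the degenerate case $w_{Y,N}=0$), whereas the paper keeps the inequalities unnormalized.
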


\begin{proof}
Let $Y,N \in \powerset_r$ with $|Y \cup N| \leq t$. We need to show that $\barr y:=\mathcal Y \sum_{\emptyset \subseteq T \subseteq N }(-1)^{|T|}\be_{Y\cup T} \in \reals^{\powerset_1}$ satisfies all constraints of $P_t(G)$ (after they are homogenized). 

Asking that $\barr y$ satisfies the constraint~\eqref{equa: edge constraints} for an edge $e=\{i,j\}$ 
is the same as asking that $w_{Y \cup \{i\},N} + w_{Y \cup \{j\},N} -w_{Y \cup \{e\},N}  \geq 0$. Note that $|Y\cup N \cup \{i,j\}| \leq r+2$.
Due to Lemma~\eqref{lem: local probabilities} and by linearity of expectation we have
$$
w_{Y \cup \{i\},N} + w_{Y \cup \{j\},N} -w_{Y \cup \{e\},N} 
=
\Exp{\distr_{p}(Y \cup N \cup \{i,j\} )}
{\prod_{q \in Y} X_q \prod_{p \in N} \left( 1- X_p \right) (X_i + X_j - X_e )}.
$$
But recall that in Random Process~\ref{def: distribution} we we set $x_e=1$ only when at least one among $x_i,x_j$ is already set to 1. Therefore the previous expected value is always non negative. 

In a similar manner we can show that box constraints~\eqref{equa: box constraints} are satisfied. First, constraints of the form $x_q \geq 0$, $q \in V \cup E$ are satisfied for $\barr y$, since by Lemma~\ref{lem: local probabilities},  $w_{Y\cup \{q\},N}$ represents a probability of an event. As for constraints $x_q\leq 1$, we need to prove that $w_{Y\cup \{q\}, N} \leq w_{Y, N}$. This is true again due to Lemma~\ref{lem: local probabilities}, and because the event associated with $w_{Y,N}$ is logically implied by that of $w_{Y \cup \{q\}, N}$. 

Finally we need to show that $\barr y$ satisfies constraint~\eqref{equa: demand constraint}, i.e. constraint $\sum_{e \in E} w_{Y\cup \{e\},N} \geq t \cdot w_{Y,N}$. For this we recall that $|Y \cup N | \leq r$, and so in the original clique on $n$ vertices, there is a subclique $G'=(U,F)$ on at least $n-2r\geq 4$ vertices, such that no edge in $F$ is incident to any element (vertex or edge) in $Y\cup N$, and $|F| \geq \binom{n-2r}{2}>0$. This means that for every $f \in F$ the event that $X_f=1$ is independent to any 0-1 assignment on variables in $Y \cup N$, while $\pr{\distr_p}{X_f=1} \stackrel{\eqref{equa: singleton SA values}}= 2p-p^2 \geq p$, since $p=t/\binom{n-2r}{2}\leq t/\binom{2t+2}{2}<1/2$. Since we also have $|F| \cdot p = |F| \cdot t/\binom{n-2r}{2} \geq t$, we conclude that 
$
\sum_{e \in E} w_{Y\cup \{e\},N} 
\geq 
\sum_{e \in F} w_{Y\cup \{e\},N} 
=
|F| \cdot p \cdot   w_{Y,N}
\geq
t\cdot w_{Y,N}
$, 
as promised.
\end{proof}
Note that by~\eqref{equa: singleton SA values}, and for the value of $p$ as in Lemma~\ref{lem: main SA pvc}, the objective of the level-$r$ \iSA\ LP is no more than $n \cdot p = t\cdot n / \binom{n-2r}{2}$, while the optimal solution of the input graph has cost 1, concluding the proof of Theorem~\ref{thm: sa ig lower bound with parameters}.

It is worthwhile noticing that our superconstant integrality gaps lower bounds hold only for values of parameter $t =o(n)$. The reader can easily verify that when the input is the $n$-clique, then the optimal solution to~\eqref{equa: LP pvc} is exactly $t/(n-1)$ (e.g. using the dual of~\eqref{equa: LP pvc}). Therefore, for any constant $c$ and when $n/c \leq t\leq n-1$, for which the optimal solution to \pvc\ is still 1, the integrality gap of~\eqref{equa: LP pvc} is strictly less than $c$. In particular, the integrality gap drops below 2 when $c\geq 2$. 


\section{IG lower bounds for various SDP hierarchies}\label{sec: lower bounds for various SDPs}

\subsection{SDPs derived by the \iSAp\ and \iLSp\ systems}\label{sec: if for lsp and sap}

In this section we argue that the moment matrix $\mathcal Y$ of solution $y$ that we proposed in Lemma~\ref{lem: main SA pvc} satisfies very strong PSD conditions. This will immediately imply the same IG lower bounds of Theorem~\ref{thm: sa ig lower bound with parameters} also for stronger SDP systems, as summarized in the next theorem. 
\begin{theorem}\label{thm: lsp sap ig lower bound with parameters}
Let $n,r,t$ be integers with $n \geq 2r+2t+2$. Then the integrality gap of the level-$r$ \iLSp\ and \iSAp\ tightenings of~\eqref{equa: LP pvc} on graphs with $n$ vertices is at least $\binom{n-2r}{2}/t\cdot n$. 
\end{theorem}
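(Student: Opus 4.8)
The plan is to show that the very same vector $y$ constructed in Lemma~\ref{lem: main SA pvc} --- which we already know lies in $\mathcal S^{(r)}(P_t(G))$ --- additionally satisfies all the PSD requirements built into the \iLSp\ and \iSAp\ systems, so that the integrality gap computation of Theorem~\ref{thm: sa ig lower bound with parameters} transfers verbatim. The conceptual point, flagged in Section~\ref{sec: techniques}, is that $y$ arises as the sequence of moments of a genuine distribution $\distr_p$ over $0$-$1$ assignments: for every $A$ we have $y_A = \Exp{\distr_p}{\prod_{q\in A}X_q}$. Distributions over $0$-$1$ points are exactly what the Lasserre-type PSD conditions are designed not to rule out, so we expect the PSD constraints to be satisfied "for free". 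The main obstacle, such as it is, will be bookkeeping: making precise which principal minors / which conified slices the \iSAp\ and \iLSp\ definitions require to be PSD, and checking that each is a nonnegative combination of rank-one outer products coming from the support of $\distr_p$.

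First I would handle \iSAp. By Definition~\ref{def: SAp} we must verify that the $(m+1)$-leading principal submatrix of the moment matrix $\mathcal Y$ --- the one indexed by $\powerset_1 \times \powerset_1$, with entries $\mathcal Y_{\{p\},\{q\}} = y_{\{p\}\cup\{q\}}$ --- is PSD. Write $\widehat X \in \{0,1\}^{\powerset_1}$ for the random vector with a leading $1$-coordinate and coordinates $X_q$, $q\in V\cup E$, drawn according to Random Process~\ref{def: distribution}. Then, using $X_q^2 = X_q$, one checks entrywise that $\mathcal Y\restriction_{\powerset_1\times\powerset_1} = \Exp{\distr_p}{\widehat X \widehat X^{\mathsf T}}$, which is an expectation of PSD rank-one matrices and hence PSD. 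Since $y$ already satisfies the \iSA\ constraints by Lemma~\ref{lem: main SA pvc}, this establishes $y \in \mathcal S^{(r)}_+(P_t(G))$ in the \iSAp\ sense, and the objective bound $n\cdot p = tn/\binom{n-2r}{2}$ against $opt=1$ gives the claimed gap.

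Next I would handle \iLSp, which requires a slightly different (recursive) argument. By Definition~\ref{def: LS+} it suffices to exhibit, for the projected point $x$ with $x_{\{i\}}=p$ and $x_{\{f\}}=2p-p^2$, a $y'\in\reals^{\powerset_2}$ whose moment matrix $\mathcal Y'$ satisfies $\mathcal Y'\succeq \bzero$, $\mathcal Y'\be_\emptyset = x$, and $\mathcal Y'\be_{\{i\}},\ \mathcal Y'(\be_\emptyset-\be_{\{i\}}) \in K^{(r-1)}_+$ for every $i\in[m]$. I would take $y'$ to be the degree-$\le 2$ truncation of the same moment sequence, i.e. $y'_A = \Exp{\distr_p}{\prod_{q\in A}X_q}$ for $|A|\le 2$; then $\mathcal Y' = \Exp{\distr_p}{\widehat X\widehat X^{\mathsf T}}$ as above is PSD. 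The conditioned vectors are the natural ones: $\mathcal Y'\be_{\{i\}}$ is (up to the leading coordinate) the moment vector of $\distr_p$ conditioned on $X_i=1$, and $\mathcal Y'(\be_\emptyset-\be_{\{i\}})$ the moment vector conditioned on $X_i=0$; each is again the moment sequence of an actual distribution over $0$-$1$ solutions of $P_t(G)$, because conditioning $\distr_p$ on a single vertex/edge variable still produces assignments that obey the edge and box constraints, and I would verify the demand constraint survives conditioning exactly as in the proof of Lemma~\ref{lem: main SA pvc} (removing one vertex or edge still leaves a clique on $\ge n-2r-2\ge 2t$ vertices covered independently with probability $\ge p$). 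Running this argument down the recursion --- formally, proving by induction on $r$ that the moment sequence of $\distr_p$ restricted to $\powerset_2$ lies in $K^{(r)}_+$ --- gives $x \in \mathcal N^{(r)}_+(P_t(G))$. Combining the two halves yields Theorem~\ref{thm: lsp sap ig lower bound with parameters}; the part needing the most care is confirming that every conditioning used in the \iLSp\ recursion preserves feasibility for $P_t(G)$, in particular the single global demand constraint~\eqref{equa: demand constraint}.
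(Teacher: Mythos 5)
Your proposal is correct and takes essentially the same route as the paper: it reuses the level-$r$ \iSA\ solution of Lemma~\ref{lem: main SA pvc}, observes that every required moment/protection matrix is an expectation over the global distribution $\distr_p$ of rank-one matrices $\widehat X\widehat X^{\mathsf T}$ and hence PSD (this is the paper's Observation~\ref{obs: psdness of minors}, with the matrices $\mathcal X^{Y,N}$ playing the role of your conditioned moment matrices), and your explicit induction for \iLSp\ is precisely the ``use $\mathcal X^{Y,N}$ as protection matrices in the inductive definition'' argument that the paper states at a high level with a citation to Tourlakis. One wording caveat worth fixing: the conditioned measures are \emph{not} distributions over integral points of $P_t(G)$ --- almost all their mass sits on the all-zero assignment, which violates the demand constraint --- but this does not damage your argument, since (as you yourself note) linear feasibility of the conditioned vectors is exactly what the proof of Lemma~\ref{lem: main SA pvc} certifies (the demand constraint being handled in expectation via the untouched subclique), while the PSD conditions require no feasibility whatsoever.
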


For proving Theorem~\ref{thm: lsp sap ig lower bound with parameters}, we fix the clique $\graph$ on $n$ vertices, together with $r,t$ such that $n\geq 2r+2t+2$. In all our arguments below we use $y \in \reals^{\powerset_{r+1}}$ as defined in~\eqref{equa: def of SA solution}, as well as vector $w$ (indexed by pairs of sets of variables) as it appears in Lemma~\ref{lem: local probabilities}. We also define the matrix $\mathcal X^{Y,N} \in \reals^{\powerset_1 \times \powerset_1}$, which at entry $A,B$ (i.e. any two sets of size at most 1) equals $w_{Y\cup A \cup B, N}$. Note that matrix $\mathcal X^{Y,N}$ is exactly the moment matrix of random variables $\{X_q\}_{q\in V\cup E}$ condition on $X_q=1$ for all $q\in Y$, and $X_{q'}=0$ for all $q' \in N$, scaled by the constant $\pr{\distr_p}{\forall q \in Y, X_q=1~\&~\forall q'\in N, X_{q'}=0}$. In particular, for each $q \in V \cup E$ we have that vectors $\mathcal X^{Y,N}\be_q, \mathcal X^{Y,N} (\be_\emptyset - \be_q)$ satisfy all constraints of $P_t(G)$. 
We have the following well known observation. 

\begin{observation}\label{obs: psdness of minors}
Let $Y,N$ be any subsets of $V \cup E$ such that $|Y \cup N| \leq r-1$. Then $\mathcal X^{Y,N}$ is positive semidefinite. 
\end{observation}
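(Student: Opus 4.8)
The plan is to show that $\mathcal{X}^{Y,N}$ arises as a nonnegative combination of rank-one PSD matrices, exactly as sketched in the paragraph preceding the statement, and then to verify carefully that the combination really is well-defined (in particular that the relevant events have probability summing to the right scalar). First I would fix $Y, N \subseteq V \cup E$ with $|Y \cup N| \leq r-1$, and recall from~\eqref{equa: def of SA solution} that $y$ is built from the global distribution $\distr_p$ of Random Process~\ref{def: distribution} over $0$-$1$ assignments $x \in \{0,1\}^{\powerset_1}$ (with the convention $x_\emptyset = 1$). For each such assignment $x$ in the support of $\distr_p$, the outer product $x x^T$ is a rank-one PSD matrix indexed by $\powerset_1 \times \powerset_1$.

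The key step is to write $\mathcal{X}^{Y,N}$ explicitly in terms of these outer products. By definition $\mathcal{X}^{Y,N}_{A,B} = w_{Y \cup A \cup B, N}$ for $|A|, |B| \leq 1$, and by Lemma~\ref{lem: local probabilities} this equals $\pr{\distr_p}{\forall q \in Y \cup A \cup B,\ X_q = 1\ \&\ \forall q' \in N,\ X_{q'} = 0}$. Since the $X_q$ are $0$-$1$ random variables, for a fixed assignment $x$ the product $\prod_{q \in Y}x_q \prod_{q' \in N}(1 - x_{q'}) \cdot x_A x_B$ (where $x_A = \prod_{q\in A} x_q$, an empty product being $1$) equals $(x x^T)_{A,B}$ whenever $x$ is consistent with $X_q = 1$ on $Y$ and $X_{q'}=0$ on $N$, and is $0$ otherwise. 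Taking expectations and using linearity, I would conclude
\begin{equation*}
\mathcal{X}^{Y,N} = \sum_{x} \pr{\distr_p}{X = x}\cdot \mathbf{1}[x_q = 1\ \forall q \in Y,\ x_{q'} = 0\ \forall q' \in N]\cdot x x^T,
\end{equation*}
a sum over the finitely many assignments $x$ in the support, with nonnegative coefficients. Hence $\mathcal{X}^{Y,N}$ is a nonnegative combination of PSD matrices and is therefore PSD.

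There is essentially no serious obstacle here; the statement is a direct consequence of the fact that the solution $y$ comes from an honest global distribution over $0$-$1$ points rather than from an abstract locally consistent family. The one place that requires a line of care is matching indices: one must check that restricting the moment matrix to singletons and the empty set (so $A, B \in \powerset_1$) is compatible with the indexing in Observation~\ref{obs: psdness of minors}, and that the condition $|Y \cup N| \leq r-1$ (rather than $\leq r$) guarantees that every entry $w_{Y \cup A \cup B, N}$ with $|A|,|B|\le 1$ is indexed by a set in $\powerset_{r+1}$, so that $y_{Y \cup A \cup B \cup T}$ is defined for all $T \subseteq N$ and Lemma~\ref{lem: local probabilities} applies. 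Once that bookkeeping is in place, the PSD conclusion is immediate, and this Observation then feeds directly into the proof of Theorem~\ref{thm: lsp sap ig lower bound with parameters} by supplying the positive-semidefiniteness of the moment matrix and of the matrices $\mathcal{X}^{Y,N}\be_q$, $\mathcal{X}^{Y,N}(\be_\emptyset - \be_q)$ needed for the \iLSp\ recursion.
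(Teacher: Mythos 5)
Your proof is correct and follows essentially the same route as the paper: both express $\mathcal X^{Y,N}$ as a nonnegative combination of rank-one matrices $xx^T$ coming from the global distribution $\distr_p$ (restricted to assignments consistent with $Y$ and $N$), which immediately gives positive semidefiniteness. Your extra bookkeeping about the bound $|Y\cup N|\le r-1$ keeping all entries inside $\powerset_{r+1}$ is a fine addition but does not change the argument.
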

Indeed, recall that $y \in \reals^{\powerset_{r+1}}$ is obtained by the global distribution $\distr_p$ that associates \textrm{any} 0-1 assignment of variables of $P_t(G)$ with some probability. In particular, if $x \in \{0,1\}^{\powerset_1}$, with $x_\emptyset =1$, is such a 0-1 assignment, then $xx^T$ is a rank 1 PSD matrix. Clearly, matrix $\mathcal X^{Y,N}$ is a convex combination of such rank-1 PSD matrices, hence it is PSD as well.

It is now immediate that our \iSA\ solution $y$ satisfies also the extra PSD constraint imposed by \iSAp. What we only need to observe is that the leading principal minor of $\mathcal Y$ indexed by sets of size at most 1 is exactly $\mathcal X^{\emptyset,\emptyset}$, which is PSD by Observation~\ref{obs: psdness of minors}. Hence, Theorem~\ref{thm: sa ig lower bound with parameters} also holds when \iSA\ tightenings are replaced by \iSAp\ tightenings. 

Next we argue that our \iSA\ solution is robust against much stronger SDP refinements. Note that vector $w$ is well defined for all level-$r$ \iSA\ solutions $y$. Especially when $y$ is obtained as a convex combination of integral vectors, all matrices $\mathcal X^{Y,N}$ are PSD, for all $|Y\cup N|\leq r-1$. That is, the latter constraints constitute a further refinement of the \iSAp\ system. 
Again by Observation~\ref{obs: psdness of minors} it is immediate that our level-$r$ \iSA\ solution fools also these exponentially many (in $r$) PSD conditions. What makes this new observation interesting is that these new PSD refinements are stronger than the constraints derived by the level-$(r-1)$ \iLSp\ system (see~\cite{Tou06p}). At a high level, this is true due to an alternative inductive definition of the \iSA\ system (similar to the inductive definition of the \iLSp\ system) that allows to use matrices $\mathcal X^{Y,N}$ as the ``protection moment matrices" required by Definition~\ref{def: LS+}.

\ignore{
At a high level, this claim follows directly from the inductive definition of the \iSA\ system (see~\cite{Tou06p}). Indeed, the containment $x \in \mathcal N^{(r)}_+(P)$ can be witnessed by a so-called protection (moment) matrix $\mathcal Y$, as in Definition~\ref{def: LS+}, that has the property $\mathcal Y \be_q,\mathcal Y \left(\be_\emptyset -\be_q\right) \in \mathcal N^{(r-1)}_+(P)$ (and that needs to be PSD). Now think of the following prover-adversary game, in which the goal is to prove the claim ``$x \in \mathcal N^{(r)}_+(P)$''. The prover presents the moment matrix $\mathcal Y$ and prove it is PSD. Then the adversary challenges the prover to show that one of the columns $x_{\barr q_1} \in \left\{
\mathcal Y \be_q,\mathcal Y \left(\be_\emptyset -\be_q\right)\right\}_{q \in [m]}$ is in $\mathcal N^{(r-1)}_+(P)$. Then the prover provides a PSD moment matrix $\mathcal Y_{\barr q_1}$ that witnesses that $x_{\barr q_1} \in \mathcal N^{(r-1)}_+(P)$. The adversary then challenges again with $x_{\barr q_2} \in \left\{
\mathcal Y_{\barr q_1} \be_q,\mathcal Y_{\barr q_1} \left(\be_\emptyset -\be_q\right)\right\}_{q \in [m]}$ and the prover needs to show that ``$x_{\barr q_2} \in N^{(r-1)}_+(P)$''. Then, the prover needs to provide PSD moment matrix $\mathcal Y_{\barr q_1, \barr q_2}$ that witnesses that $x_{\barr q_1} \in \mathcal N^{(r-2)}_+(P)$. Continuing in this manner, at the end of the game the adversary will have chosen a sequence $\barr \bq = (\barr q_1, \barr q_2, \ldots, \barr q_{r})$. If the prover can show that $\barr q_{r}$ satisfies all constraints of the original polytope $P$, and this for all sequences $\bq$, then that would have proven that $x \in \mathcal N^{(r)}_+(P)$. 
}

\subsection{On SDPs derived by the \la\ system}\label{sec: las section high level}

In light of the discussion in Section~\ref{sec: if for lsp and sap}, a natural question to ask is whether our \iSA\ solution fools SDPs derived by the so-called Lasserre (\iLa) system~\cite{Las01}. For completeness, we briefly elaborate on this question, by concluding that the level-1 SDP derived by the \iLa\ system does eliminate our bad integrality gap solution. For convenience we consider $P=P_t(G)$ as the underlying polytope that is to be tightened.

Our proposed \iSA\ solution Lemma~\ref{lem: main SA pvc} can be easily seen to satisfy many level-$r$ \iLa\ PSD-constraints but one. In fact, we can show that even the level-1 \iLa\ SDP is not fooled by our \iSA\ solution. 

\begin{lemma}\label{lem: no las sol}
For any constant $r$, the level-1 \iLa\ SDP eliminates the level-$r$ solution proposed in Lemma~\ref{lem: main SA pvc}.
\end{lemma}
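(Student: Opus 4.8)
The plan is to single out the one Lasserre constraint that our solution can possibly violate, namely the slack moment matrix of the demand constraint $\sum_{e\in E}x_e\ge t$, and show it fails to be PSD. Extend the vector $y$ of Lemma~\ref{lem: main SA pvc} to all subsets of $V\cup E$ by the same formula $y_A=\Exp{\distr_p}{\prod_{q\in A}X_q}$, and write $v^{(j)}$ for the vector indexed by $\powerset_j$ with $v^{(j)}_A=\prod_{q\in A}X_q$; since the $X_q$ are $0$–$1$ this is idempotent, $v^{(j)}_Av^{(j)}_B=v^{(j)}_{A\cup B}$. Then the main \iLa\ moment matrix satisfies $\mathcal Z=\Exp{\distr_p}{v^{(2)}(v^{(2)})^T}\psd\bzero$, and, for the slack $g^{(l)}$ of any constraint $(l)$ of $P_t(G)$ and any $z\in\reals^{\powerset_1}$ with associated affine function $f(X)=\sum_{A\in\powerset_1}z_A v^{(1)}_A=z_\emptyset+\sum_{i\in V}z_{\{i\}}X_i+\sum_{e\in E}z_{\{e\}}X_e$, one computes $z^T\mathcal Z^{(l)}z=\Exp{\distr_p}{g^{(l)}(X)\,f(X)^2}$. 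For every constraint of $P_t(G)$ other than the demand constraint the slack $g^{(l)}$ is nonnegative on the whole support of $\distr_p$ — this is exactly how $\distr_p$ was built — so the corresponding $\mathcal Z^{(l)}$, and $\mathcal Z$ itself, are automatically PSD. Hence it suffices to exhibit $z\in\reals^{\powerset_1}$ with $\Exp{\distr_p}{\big(\sum_{e\in E}X_e-t\big)\,f(X)^2}<0$.

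The key observation is that on the $n$-clique the demand slack $\sum_{e\in E}X_e-t$ is negative on exactly one event: if $k$ denotes the number of chosen vertices, the number of covered edges is $nk-\binom{k+1}{2}$, which equals $0<t$ for $k=0$ and is at least $n-1>t$ for every $k\ge1$ (using $n\ge2t+2$). So I choose $f$ to annihilate the small nonzero cases. By the symmetry of the clique and of $\distr_p$ it suffices to take $z$ of the form $z_\emptyset=1$, $z_{\{i\}}=b$ for all $i\in V$, $z_{\{e\}}=c$ for all $e\in E$; then, conditioned on $k$ chosen vertices, $f$ takes the deterministic value $1+bk+c\big(nk-\binom{k+1}{2}\big)$. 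Forcing this to vanish at $k=1$ and $k=2$ yields the unique solution $b=n-2$, $c=-1$, after which a one-line simplification gives $f=1$ when $k=0$ and $f=\binom{k-1}{2}$ when $k\ge1$ (in particular $f=0$ for $k\in\{1,2\}$). Therefore
$$
z^T\mathcal Z^{\mathrm{dem}}z \;=\; -t(1-p)^n\;+\;\sum_{k\ge3}\binom{n}{k}p^k(1-p)^{n-k}\Big(nk-\binom{k+1}{2}-t\Big)\binom{k-1}{2}^2 .
$$

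What remains, and the only part that calls for estimation, is to show the tail sum is strictly smaller than $t(1-p)^n$. With $p=t/\binom{n-2r}{2}$ and $n$ large relative to the constants $r$ and $t$, one has $np=O(t/n)=o(1)$, hence $t(1-p)^n\ge t(1-np)=(1-o(1))t$, while the crude bounds $\binom{n}{k}\le n^k/k!$, $nk-\binom{k+1}{2}-t\le n^2$ and $\binom{k-1}{2}^2\le k^4$ give $\sum_{k\ge3}(\cdots)\le n^2(np)^3\sum_{k\ge3}\frac{(np)^{k-3}k^4}{k!}=O(t^3/n)$. Thus for $n$ sufficiently large compared to $r,t$ the displayed quantity is strictly negative, so $\mathcal Z^{\mathrm{dem}}$ is not PSD and the level-$1$ \iLa\ SDP discards our level-$r$ solution; since $r,t$ are treated as constants here, this is exactly the regime in which the integrality gap $\Theta(n/t)$ is meaningful. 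The main obstacle is conceptual rather than computational: realizing that the demand constraint is the sole obstruction and finding the test vector whose affine function annihilates the $k=1,2$ contributions, so that the surviving $k\ge3$ terms are dwarfed by the $k=0$ term; once this $f$ is in hand, the remaining steps are elementary.
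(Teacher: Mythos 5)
Your proposal is correct, and while it shares the paper's overall skeleton --- both arguments observe that, because $y$ comes from a global distribution of $0$--$1$ assignments whose support satisfies every constraint of $P_t(G)$ except the demand constraint, the moment matrix and all slack moment matrices other than the demand one are automatically PSD, so everything hinges on showing the demand slack matrix is not PSD --- your treatment of that key step is genuinely different. The paper restricts to the principal minor indexed by vertex sets of size at most $1$, writes its entries in closed form via $S_m$ and $C_{n,a}$, takes the Schur complement with respect to the $(\emptyset,\emptyset)$ entry, and uses the clique symmetry to read off the relevant eigenvalue from the all-ones eigenvector, checking (by calculations it leaves implicit) that its leading term is negative when $p=c/n^2$. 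You instead produce an explicit violating vector that also uses the edge coordinates, $z_\emptyset=1$, $z_{\{i\}}=n-2$, $z_{\{e\}}=-1$, chosen so that the associated affine function vanishes exactly when one or two vertices are chosen; conditioning on the number $k$ of chosen vertices (your formulas $C_{n,k}=nk-\binom{k+1}{2}$ and $f=\binom{k-1}{2}$ check out) reduces the quadratic form to $-t(1-p)^n$ plus a $k\ge 3$ tail that you bound by $O(t^3/n)$, which is negative for $n$ large compared to the constants $r,t$ --- the same asymptotic regime ($p=\Theta(1/n^2)$, $n\to\infty$) in which the paper's proof operates. What your route buys is an explicit PSD-violating certificate and fully transparent quantitative control: your estimate shows negativity whenever $t=o(\sqrt n)$, i.e.\ $p=o(1/n^{1.5})$, which is precisely the extension the paper only mentions as obtainable by "a slight modification." What the paper's route buys is that it never needs the edge coordinates of the slack matrix and reduces the question to a single eigenvalue of a highly symmetric matrix, at the cost of an unspelled asymptotic expansion. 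Both proofs are sound; yours is arguably the more self-contained of the two.
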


\begin{proof}
Fix $n,t,p$, and let $y$ be the solution to the level-$(r)$ \iSA-tightening as described in Lemma~\ref{lem: main SA pvc}. Recall that our \pvc\ instance is the complete graph $\graph$ on $n$ vertices, in which every vertex is chosen independently at random with probability $p$. 

For completeness, first we briefly elaborate on \iLa\ PSD constraints that are satisfied. 
Moment matrix $\mathcal Z$ along with the slack matrices of constraints \eqref{equa: edge constraints}, \eqref{equa: box constraints} are all PSD (and this remains true even for level $\lfloor r/2 \rfloor$ \iLa\ PSD constraints). The argument for this is identical to the one used to prove Observation~\ref{obs: psdness of minors} (recall that $y$ is obtained from a global distribution of 0-1 assignments). 

It therefore remains to check the PSDness of the level-1 slack matrix of the demand constraint~\eqref{equa: demand constraint}. In order to prove that this matrix is not PSD, it suffices to focus on its principal minor ${\mathcal M}$ that is indexed only by subsets of vertices. To that end, let $y_A \in \reals^{\mathcal P_1}$ be the indicator vector of set $A\subseteq V$. Let also $L_n$ denote the expected slack we have in constraint~\eqref{equa: demand constraint} when each vertex is chosen with probability $p$ in the $n$-clique, and $C_{n,a}$ be the number of edges that are covered by choosing $a$ many vertices in the same graph. Then, it is easy to verify by definition that ${\mathcal M}$ has the form 
\begin{align}
\left({\mathcal M}\right)_{I,J} 
&
= \sum_{A\subseteq V} 
\underbrace{p^{|A|}(1-p)^{n-|A|}}_\text{Probability of choosing only vertices $A$}
\underbrace{
\left(
\binom{|A|}{2}+|A|(n-|A|)-t
\right)
}_\text{$C_{n,|A|}:=$ Slack of constraint~\eqref{equa: demand constraint} when choosing $A$}
\left( y_A y_A^T \right)_{I,J} \notag
\\
&= \sum_{I\cup J \subseteq A\subseteq V} p^{|A|}(1-p)^{n-|A|} \left(
\binom{|A|}{2}+|A|(n-|A|)-t
\right) \notag \\
&=
p^{|I\cup J|} 
\sum_{A\subseteq V\setminus \left( I\cup J \right)} 
p^{|A|}(1-p)^{(n-|I\cup J|)-|A|}
\left(
C_{n-|I\cup J|, |A|}-t + C_{n,|I\cup J|}
\right) \notag \\
&= 
p^{|I\cup J|} 
\left(
L_{n-|I\cup J|}
+
C_{n,|I\cup J|}
\right). \label{eq: value of slack matrix}
\end{align}
The second to last equality is obtained by factoring out $p^{|I\cup J|}$ and using the definition of $C_{n-|I\cup J|, |A|}$.
The last equality follows from the fact that $L_n=\sum_{a=0}^n\binom{n}{a}p^a(1-p)^{n-a}(C_{n,a}-t)$. 
Note also that $\left({\mathcal M}\right)_{\emptyset,\emptyset} = L_n = \binom{n}2 (2p-p^2) - t$. Applying the Schur complement on ${\mathcal M}$ with respect to the entry $\left({\mathcal M}\right)_{\emptyset,\emptyset}$, and given that $L_n>0$, we have that ${\mathcal M}$ is PSD if and only if 
$M - \frac{\left(
p^{} 
\left(L_{n-1}
+
C_{n,1}
\right)\right)^2}{L_n}J_n$
 is PSD, where $M$ is the minor of ${\mathcal M}$ indexed by sets of vertices of size 1, and $J_n$ is the all-one $n \times n$ matrix. By symmetry, all rows of $M$ have the same sum, i.e. the all-one vector $\bone$ is an eigenvector for the Schur complement. The corresponding eigenvalue can be computed by noticing that
\begin{align*}
& \left(
M - \frac{\left(
p^{} 
\left(L_{n-1}
+
C_{n,1}
\right)\right)^2}{L_n}J_n
\right) \bone \\
= &
\left(
p^{} 
\left(
L_{n-1}
+
C_{n,1}
\right)
+
(n-1)
p^{2} 
\left(
L_{n-2}
+
C_{n,2}
\right)
-
\frac{n\left(
p^{} 
\left(L_{n-1}
+
C_{n,1}
\right)\right)^2}{L_n}
\right) \bone
\end{align*}
Elementary calculations then show that the leading term of the eigenvalue above, when $p=c/n^2$, is $\left(-2 c^4-\frac{15 c^3}{2}-2 c^2\right)\frac1n<0$ (the rest of the summands are of order $o(1/n)$). 
\end{proof}
Interestingly, a slight modification of the proof of Lemma~\ref{lem: no las sol} can show that the solution proposed in Lemma~\ref{lem: main SA pvc} is violated by the level-1 \iLa\ SDP as long as $p=o\left( 1/n^{1.5} \right)$. 


\subsection{Lower bounds in the Level-$1$ Lasserre System}
\label{sec: l1 la}
However, the Lasserre system still admits a non-constant integrality gap at level $1$, so long as $p \geq O(t/n^{1.5})$.  Formally,
\begin{lemma}
There exists a solution to the Level-$1$ \iLa\ SDP which has integrality gap $O(\sqrt{n})$.
\end{lemma}

To construct this solution
we will need some further structural results.  We start by extending our construction of the vertex-indexed-minor only Lasserre slack matrix of constraint~\eqref{equa: demand constraint} given in the previous section to also account for edges.  For every $A \subseteq V[K_n]$, let $y_A$ be a vector indexed over subsets of $E[K_n] \cup V[K_n]$ defined in the following manner:
\[
(y_A)_B = \left\{\begin{array}{cl} 1 & \text{ if} B \subseteq (A \cup E[K_n]) \setminus E[K_n \setminus A] \\
	0 & \text{ otherwise.} \end{array}\right.
\]
It is not too hard to see that the construction for the Slack moment matrix also carries over.
\begin{lemma}
Let ${\mathcal Z}$ be the full Lasserre slack matrix for constraint~\eqref{equa: demand constraint} for the solution proposed in Lemma~\ref{lem: main SA pvc}.  Then
\[{\mathcal Z} =  \sum _{A \subseteq V} p^{|A|}(1-p)^{n - |A|}C_{n,|A|}y_A y_A^T\]
\end{lemma}
\begin{proof}
    Observe first that ${\mathcal Z} = \sum _{f \in E} Z_f - tZ$, for
    \[Z = \sum _{A \subseteq V} p^{|A|}(1-p)^{n - |A|} y_Ay_A^T\]
    \[Z_f = (2p - p^2)\sum _{A \subseteq V} \mathbb{P}_\mathbb{D}[A \to 1, A^c \to 0| f \to 1] y_A y_A^T \]

    From the definition of distribution $\mathbb{D}$ (note that the chosen vertices determine the edges) we have
    \[
      \mathbb{P}_\mathbb{D}[A \to 1, A^c \to 0 | f \to 1] = \left\{\begin{array}{cl}
\frac{p^{|A|}(1 - p)^{n - |A|}}{2p-p^2} & \mbox{if} f~\mbox{is incident to any vertex in}~A\\
         0 & \mbox{otherwise} 
         \end{array}\right.
    \]

    Thus,
    \[
      Z_f = \sum _{A \subseteq V, |f \cap A| > 0} p^{|A|}(1 - p)^{n - |A|} y_A y_A^T.
    \]

    Therefore,
    \[
      \sum _{f \in E} Z_f = \sum _{A \subseteq V} p^{|A|}(1 - p)^{n - |A|}\left({|A| \choose 2} + |A|(n-|A|)\right) y_A y_A^T,
    \]
    as desired.
\end{proof}
Unfortunately ${\mathcal Z}$ is more complicated as it also refers to edge constraints.  However as it turns out
one only needs to worry about the vertex constraints in the Lasserre slack matrix.

\begin{lemma}
 \label{lem:edge-elim}
  Let $I, J \subseteq V[K_n] \cup E[K_n]$.  Let $I = F \cup D$, where $F \subseteq V$ and $D \subseteq E$.
  Consider the row $\mathcal{Z}_I$ of $\mathcal{Z}$.  Then $\mathcal{Z}_I$ can be written as a linear combination of the rows indexed
  by subsets of $F \cup (\bigcup _{e \in D} e)$.
\end{lemma}
\begin{proof}
  Proceed by induction on the size of $D$.  The proof is direct when $|D| = 0$.  Suppose $|D| = k$.
  Let $e = \{x, y\} \in D$.

  Then,
    \begin{align*}
      (\mathcal{Z}_I)_J &= \sum _{I \cup J \subseteq A \cup E[A], A \subseteq V}\ p^{|A|}(1-p)^{n - |A|}C_{n,|A|}\\
              &= \sum _{F \cup (D \setminus e) \cup \{x\} \cup J \subseteq A \cup E[A], A \subseteq V} p^{|A|}(1-p)^{n - |A|}C_{n,|A|}\\
              &+ \sum _{F \cup (D \setminus e) \cup \{y\} \cup J \subseteq A \cup E[A], A \subseteq V} p^{|A|}(1-p)^{n - |A|}C_{n,|A|}\\
              &- \sum _{F \cup (D \setminus e) \cup \{x, y\} \cup J\subseteq A \cup E[A], A \subseteq V} p^{|A|}(1-p)^{n - |A|}C_{n,|A|}
      \end{align*}

  Now, by the induction hypothesis all $3$ components can be written as a linear combination of rows indexed by subsets
  of $C \cup (\bigcup _{e \in D} e)$.  So $Z_I$ can be written as a linear combination of the rows
  indexed by subsets of $F \cup (\bigcup _{e \in D} e)$.
\end{proof}

Thus:
\begin{lemma}
  The level-$r$ Lasserre minor of $\mathcal{Z}$ (moment matrix of constraint~\eqref{equa: demand constraint}) is positive semidefinite if the level-$2r$ vertex-subset-only
  Lasserre minor of $\mathcal{Z}$ is positive semidefinite.
\end{lemma}
\begin{proof}
Let $M$ be the level-$r$ Lasserre minor of $\mathcal{Z}$.
Let $N$ be the minor of $\mathcal{Z}$ formed by taking all subsets $S$ of vertices and edges of $E$ such that $S$ contains or is incident to at most $2r$ vertices in $G$.  Clearly $M$ is a symmetric minor of $N$, so $N \succeq 0 \implies M \succeq 0$. Now, from Lemma~\ref{lem:edge-elim},
we can eliminate the rows $N_s$ indexed by subsets $S$ containing edges using rows indexed by
subsets containing up to $2r$ vertices.  Let $Q$ be the elementary matrix that encodes these elementary row operations.  Symmetrically, as $N$ is symmetric, $Q^T$ will eliminate the columns of $N$ indexed by subsets $S$ containing edges using columns indexed by
subsets containing up to $2r$ vertices.

Now 
$QNQ^T$ is exactly the level-$2r$ vertex subset only Lasserre minor of $\mathcal{Z}$.
So $QNQ^T \succeq 0 \iff N \succeq 0 \implies M \succeq 0$, as desired.
\end{proof}

Let $\Z$ denote the level-$2$ vertex-subset-only Lasserre minor of $\mathcal{Z}$. 
Let also 
$$S_k := p^k
\left(k(n-k) + {k \choose 2} + {n - k \choose 2}(2p - p^2) - t\right),$$
see also~\eqref{eq: value of slack matrix}. 
 It is not too hard to see that $\Z$ has the following form. 
\begin{small}
\[
 \Z =
\left[ \begin{array}{c|ccc}
	   &\varnothing & J, |J| = 1 & J, |J| = 2 \\ \hline
      \varnothing&S_0 & S_1 & S_2 \\
      I,|I|=1&S_1 &
      \left\{\begin{array}{cc}
          S_1 & \mbox{ if same } \\
          S_2 & \mbox{ if different } \end{array}\right\} & 
      \left\{\begin{array}{cc}
          S_2 & \mbox{ if intersect } \\
          S_3 & \mbox{ if not } \end{array}\right\} \\
      I,|I|=2&S_2 &
      \left\{\begin{array}{cc}
          S_2 & \mbox{ if intersect } \\
          S_3 & \mbox{ if not } \end{array}\right\} \ &
      \left\{\begin{array}{cc} 
          S_2 & \mbox { if equal } \\
          S_3 & \mbox { if 1 vertex shared } \\
          S_4 & \mbox { disjoint } \end{array}\right\}
    \end{array} \right]
\]
\end{small}

By Schur complement, so long as $S_0 > 0$ -- we expect to cover at least $t$ edges, $\Z$ is positive semidefinite if and only if
$\Z'$ is:
\begin{small}
\[
  \Z' = \left[ \begin{array}{c|cc}
  		&J, |J|=1&J, |J|=2\\\hline
      I, |I|=1&
      \left\{\begin{array}{cc}
      		S_1 - S_1^2/S_0 & \mbox { if same } \\
      		S_2 - S_1^2/S_0 & \mbox { if different }\end{array}\right\}
      &  B^T \\
      I, |I|=2&B &
      \left\{\begin{array}{cc} 
          S_2 - S_2^2/S_0 & \mbox { if equal } \\
          S_3 - S_2^2/S_0 & \mbox { if 1 vertex shared } \\
          S_4 - S_2^2/S_0 & \mbox { disjoint } \end{array}\right\}
    \end{array} \right]
\]\end{small}

For $B$ given by:
\begin{small}
\[
  B = 
    \left[\begin{array}{c|cc} & J, |J|=1 \\ \hline
    		\multirow{2}{*}{$I, |I|=2$} &
          S_2 - S_1S_2/S_0 & |I \cap J| = 1  \\
          &S_3 - S_1S_2/S_0 & \mbox{ otherwise } \end{array}\right] \]
\end{small}
It is clear that $\Z' = (S_3 - S_4)L + (S_2 - S_4)I + (S_4 - S_2^2/S_0)J$, where $I$ is the identity matrix, $J_{{n \choose 2}}$ is the all-$1$'s matrix on ${n \choose 2}$ rows and columns, and $L$ is the adjacency matrix for the line graph of $K_n$.

Now, so long as $(S_2 - S_1^2/S_0)J_n + (S_1 - S_2)I$ is positive definite, by Schur complement we have that  $\Z'$ is positive semidefinite so long as $Z''$ is for: \[Z'' = (S_2 - S_4 - \frac{\alpha_2 - \alpha_0}{\alpha})I + (S_3 - S_4 - \frac{\alpha_1 - \alpha_0}{\alpha})L + \left(S_4 - \frac{S_2^2}{S_0} - \frac{\alpha_0}{\alpha} + \frac{\beta(2\beta_1 + (n-2)\beta_0)^2}{\alpha^2 + \alpha\beta}\right)J_{{n\choose 2}},\]
with $\alpha = S_1 - S_2$, $\beta = S_2 - S_1^2/S_0$, $\beta_0 = S_3 - S_1S_2/S_0$, $\beta_1 = S_2 - S_1S_2/S_0$,
$\alpha_0 = 4\beta_1\beta_0 = (n-4)\beta_0^2$, $\alpha_1 = 2\beta_1\beta_0 +\beta_1^2 + (n-3)\beta_0^2$, and $\alpha_2 = 2\beta_1 + (n-2)\beta_0^2$.

Now the line graph of $K_n$ is a strongly regular graph with parameters $\left({n \choose 2}, 2n - 4, n - 2, 4\right)$, so it has eigenvalues
$2n - 4$, $n - 4$, and $-2$\footnote{Strictly speaking when $n = 3$, $2n-4$ and $n-4$ are the only eigenvalues, with the rest being $0$.  This does not affect the analysis however.}.
Moreover, the eigenvalue $2n - 4$ has multiplicity $1$ and corresponds to the line of $\mathbb{R}^n$ parallel to the all-$1$'s vector.
Hence $Z''$ is positive semidefinite when:
\begin{align*}
	(S_2 - S_4 - \frac{\alpha_2 - \alpha_0}{\alpha}) + (2n-4)(S_3 - S_4 - \frac{\alpha_1 - \alpha_0}{\alpha}) + {n \choose 2}\left(S_4 - \frac{S_2^2}{S_0} - \frac{\alpha_0}{\alpha} + \frac{\beta(2\beta_1 + (n-2)\beta_0)^2}{\alpha^2 + \alpha\beta}\right) &\geq 0, \\
	(S_2 - S_4 - \frac{\alpha_2 - \alpha_0}{\alpha}) + (n-4)(S_3 - S_4 - \frac{\alpha_1 - \alpha_0}{\alpha}) &\geq 0, \\
	(S_2 - S_4 - \frac{\alpha_2 - \alpha_0}{\alpha}) - 2(S_3 - S_4 - \frac{\alpha_1 - \alpha_0}{\alpha}) &\geq 0.
\end{align*}

Now when $p \geq 2tn^{-1.5}$ we have that all the above eigenvalues are all greater than $0$ in the limit (as $n$ tends to infinity).
Moreover we have that $S_0 \geq 0$ and that from Section 4.3 that $(S_2 - S_1^2/S_0)J_n + (S_1 - S_2)I$ is positive definite as $p \geq 2tn^{-1.5}$.  Hence the level-$1$ Lasserre system still admits an integrality gap of $\sqrt{n}$.

\subsection{Weaker Bounds on the Lasserre System}
\label{sec: general la}
Finally, in this section, we give a construction for a superconstant integrality gap valid for $\Theta(n)$ levels of the Lasserre system.

\begin{lemma}
For $1 \leq l \leq \frac{n}{2}$, there exists a solution to the level-$l$ \iLa\ SDP which admits an integrality gap of $\Theta(n^\frac{1}{2l + 2})$.
\end{lemma} 

Let $k = 2l$.  By Lemma \ref{lem:edge-elim}, the level-$l$ Lasserre slack matrix is positive semidefinite if and only if the level $k$ vertex-subset-only slack matrix is positive semidefinite.
Henceforth we will assume that vectors and matrices originally indexed over $E[K_n] \cup V[K_n]$ are indexed over subsets of $V[K_n]$, by possibly doubling the Lasserre level.

Furthermore, as we are working in the level-$k$ Lasserre system, we will assume that the subsets we are indexing over have at most $k$ elements.

Let $p$, $n$, $y_A$, $\mathcal{Z}$, and $C_{n,m}$ be defined as above (with the above restrictions taken into account), 
and let $M_A = y_A y_A^T$ for $A \subseteq V[K_n], |A| \leq k$, and let $M_m = \sum_{|A| = m} M_A$. Note that $\mathcal{Z}_k$, the level-$k$ vertex-subset-only slack matrix has the following form:
\[ \mathcal{Z}_k = \sum _{m = 0} ^n p^m (1-p)^{n-m} C_{n,m} M_m. \]

Note that
\[(M_m)_{B,D} = {n-|B \cup D|\choose m-|B \cup D|}.\]

Let $Q_m$ for $0 \leq m \leq n$ be a matrix indexed over subsets of $V[K_n]$ of size at most $k$ be defined as:
\[
	(Q_m)_{B,D} = \frac{
    {n-|B|\choose m-|B|}{n-|D|\choose m-|D|}
    }{
    {n \choose m}
    }
 \]

By Schur Complement,
\[\arraycolsep=1.4pt\def\arraystretch{1.5}
	M_m = \left[
    			\begin{array}{ll} 
                {n \choose m} & a \\ 
                     a^T      & B 
                 \end{array}
                \right]  \succeq 
         \left[
    			\begin{array}{ll} 
                {n \choose m} & a \\ 
                     a^T      &  \frac{1}{{n \choose m}} a^Ta
                 \end{array}
                \right] = Q_m.           
 \]
 
Hence, observing that $C_{n, 0} = -t$, we have that
\[
  \mathcal{Z}_k \succeq -t (1-p)^n M_0  + \sum_{m=1}^n p^{m}(1-p)^{n-m} C_{n,m} Q_m,
\]
where $M_0$ is the matrix with a $1$ in the top-left corner.

Note that although $Q_m$ is indexed by subsets of of $V[K_n]$, the value of $(Q_m)_{B,D}$ depends only on $|B|$ and $|D|$.
Hence $Q_m$ only contains $k+1$ unique rows and columns, and by applying symmetric
elementary row and column operations we may eliminate the rest to obtain matrices $P_m$ indexed by $1, 2, 3, \hdots, k + 1$
for $1 \leq m \leq n$, where:
\[(P_m)_{i,j} = (Q_m)_{B, D} \text { for any } |B| = i +1, |D| = j + 1.\]

Hence, for an appropriate choice of $p$, we wish to show that:
\[
  -t (1-p)^n M'_0  + \sum_{m=1}^n p^{m}(1-p)^{n-m} C_{n,m} P_m \succeq 0.
\]

Note that the above follows when:
\[
  \mathcal{Z}'_k = -t (1-p)^n M'_0  + \sum_{m=1}^{k+1} p^{m}(1-p)^{n-m} C_{n,m} P_m \succeq 0.
\]

We will prove, for an appropriate choice of $p$ that $\mathcal{Z}'_k$ is positive definite.  Note that for every $1 \leq q \leq k$,
the $q \times q$ leading principal minor of $\mathcal{Z}'_k$ is positive definite when $\mathcal{Z}'_q$ is positive definite.
Hence, by Sylvester's Criterion, it is sufficient to show that $\det(\mathcal{Z}'_k)$ is positive for arbitrary $k$.

Let $v_m$ be a vector defined as follows, for $1 \leq i \leq k + 1$:
\[(v_m)_i = \frac{{n-i+1 \choose m-i+1}}{{n \choose m}}.\]

Observe that $P_m = {n \choose m} v_m v_m^T$.  Hence, we may rewrite $\mathcal{Z}'_{k}$ as follows:
\[\mathcal{Z'}_{k} = -t (1-p)^n M'_0 + AA^T,\]
where $A^T$ is the matrix whose $m$th row is given by:
\[
(A^T)_m = \left[\sqrt{p^{m} (1-p)^{n-m} C_{n,m} {n \choose m} }\right] v_m^T
\]

Let $B^T$ be the submatrix of $A^T$ we obtain by removing the first column of $A^T$.  By expanding the determinant
of $\mathcal{Z'}_{k}$ along the first column, we obtain that:
\[\det(\mathcal{Z'}_{k}) = \det(AA^T) - t(1-p)^n \det(BB^T).\]

Let $V^T$ be a matrix where:
\[(V^T)_m = v_m^T.\]

We have that:
\[\det(AA^T) = \det(A^T)^2 = \left({\prod_{m=1}^{k+1} {p^{m} (1-p)^{n-m} C_{n,m} {n \choose m}}}\right)\det(V^T)^2.\]

Note that:
\[
    V^T_{ij} = \frac{{n-j+1 \choose i-j+1}}{{n \choose i}} = \frac{{i \choose j-1}}{{n \choose j-1}}.
 \]
Let $W^T$ be such that 
\[
  W^T_{ij} = {i \choose j-1}.
\]

\begin{claim} $\det(W^T) = 1$. \end{claim}
\begin{proof}
Let $W_1, W_2, \hdots W_{k+1}$ be the columns of $W^T$.  Let $D_1$ be the all-zero column vector and for $2 \leq i \leq k + 1$ let $D_i = W_i - D_{i-1}$.  Let $D$ be the matrix whose columns are given by $D_i$ for $1 \leq i \leq k + 1$ and let $W'^T = W^T - D$.  Note that $W'^T_{i,j} = {i \choose j}$.  

Now, $W'^T$ is obtained from $W^T$ by adding and subtracting columns of $W^T$ from each other and $\det(W'^T) = 1$.  Hence $\det(W^T) = 1$, as desired.
\end{proof}

By linearity
\[
  \det(V^T)=
  {\left({\prod_{m=1}^{k+1} {{n \choose m-1}}}\right)}^{-1} \det(W^T) =
  {\left({\prod_{m=1}^{k+1} {{n \choose m-1}}}\right)}^{-1}.
\]

Hence, for an appropriate choice of $p$, in particular, for $p \in o(\frac{1}{n})$, we have that:
\[
  \det(AA^T) = \Theta(p^{\frac{(k+1) (k+2)}{2}} n^{-\frac{(k+1) (k-4)}{2}}).
\]

Now, by Cauchy-Binet:
\[\det(BB^T) = \sum_{m=1}^{k+1} {\det(B_m^T)^2},\]
where $B_m^T$ is the matrix constructed from $B^T$ by removing the $m$th row of $B^T$.

Let $V^T_m$ be a matrix constructed from $V^T$ by removing the first column and $m$th row of $V^T$.
Then, by expanding the determinant,
\[
  \det(B_m^T)^2= O(p^{\frac{(k+1) (k+2)}{2}-m} n^{-m-1+\frac{(k+1) (k+4)}{2}})\det(V_m^T)^2.
\]

Now, 
\[
  \det(V_m^T) = O(n^{-\frac{k (k+1)}{2}}),
\]
and hence, for an appropriate choice of $p$, in particular, for $p \in o(\frac{1}{n})$,
\[
  \det(B_m^T)^2= O(p^{\frac{(k+1) (k+2)}{2}-k-1} n^{-k-2+\frac{(k+1) (k+4)}{2}-k (k+1)})=O(p^{\frac{(k+1) (k+2)}{2}-k-1} n^{-k-2-\frac{(k+1) (k-4)}{2}}).
\]

Now, for $p$ such that $p \in o(n^{-\frac{k+1}{k+2}})$, we have that $\det(BB^T) \in o(\det(AA^T)$.  Hence the Lasserre system
admits a non-constant integrality gap for all levels $1 \leq l \leq \frac{n}{2}$.

\section{Discussion / Open Problems}

The algorithmic significance of our results pose a natural (and classic) open problem, related also to questions on extended formulations; 
Does \pvc\ admit a polysize (or tractable) LP or SDP relaxation that has integrality gap no more than 2, even when $t=O(n)$? It is notable that this question has been studied in~\cite{BGKR13} for a generalization of \pvc\ but with no implications to our problem. 
Note also that our strongest IG lower bounds are valid only when $t/n=\epsilon$, for small enough $\epsilon>0$, where $n$ is the number of vertices of the input graph. As a result, another interesting open question is, given $t$ and $n$, find the smallest $r=r(n,t)$ for which the level-$r$ LP or SDP derived by some \lap\ system has integrality gap no more than 2. In particular, can it be that $r=\omega(1)$ when $t\geq n$?


Finally, our SDP IG lower bounds make explicit that global distributions of 0-1 assignments can be used to witness solutions to \iSA\ LP tightenings of superconstant integrality gaps. We also demonstrate that it is almost straightforward to show that the same solutions are robust against SDP tightenings of many \lap\ systems except the \iLa\ system. Can the same family of global distributions  fool \iLa\ SDPs when it is also enriched with intuitive and stronger conditions? A generic positive or negative answer would give new insights in understanding the power of the various SDP hierarchies. 

\ignore{
In other words, \iSA\ LPs are at least as strong as SDPs derived by all SDP hierarchies (but the \iLa\ system) when it comes to solutions that are sampled from 0-1 assignments. The latter observation becomes handy, once such distributions can be used to prove large integrality gaps, as we show it is the case for the partial vertex cover polytope. It is therefore natural to ask whether \iLa\ constraints are not stronger than \iSA\ constraints for the same solutions.
}

\paragraph{Acknowledgments:} We would like to thank the anonymous conference and journal referees for their valuable comments.


\bibliographystyle{plain}
\bibliography{PartialVertexCover}

\ignore{

\section{Parts omitted from Section~\ref{sec: Hierarchies of LP and SDP relaxations}}

\subsection{Intuition of \iLS\ and \iLSp\ systems}\label{sec: intuition of ls lsp systems}
The intuition of the technical Definition~\ref{def: LS+} is simple, at least for the level-1 relaxation; multiply each constraint of polytope $P$ by dergee-1 polynomials $x_i, 1-x_i$ (for all $i$), and after expanding the quadratic expressions, substitute $x_i \cdot x_j$ by a brand new linear variable $y_{\{i,j\}}$, effectively simulating the identity $x_i^2=x_i$ which is valid in $P\cap \{0,1\}^m$. For example asking that $Y\left( \be_\emptyset - \be_{\{i\}} \right) \in K^{(0)}$ is the same as multiplying all constraints of $P$ by $1-x_i$, and after linearizing as described above, and asking that the linear system is feasible. Therefore, the vectors $y \in \reals^{\powerset_2}$ of Definition~\ref{def: LS+} are meant to simulate monomials of degree at most 2, whereas the corresponding moment matrix $\mathcal Y$ for integral solutions $\barr x$ is simply the rank 1 positive definite matrix $\barr x \barr x^T$, hence the valid constraint $\mathcal Y\psd \bzero$. 

\subsection{Intuition of \iSA\ system}\label{sec: intuition of sa system}
The intuition behind the technical Definition~\ref{def: SA} is as follows; multiply each constraint of polytope $P$ by dergee-$r$ polynomials of the form $\prod_{i \in Y}x_i \prod_{j \in N}(1-x_j)$, for some sets $Y \cup N \in \powerset_r$. After expanding the high degree polynomial expressions, substitute $\prod_{i \in A} x_i$ by a brand new linear variable $y_{A}$, effectively simulating the identity $x_i^k=x_i$ for all $k=1,\ldots,r+1$, which is valid constraint in $P\cap \{0,1\}^m$. For example note that by expanding and linearizing $\prod_{i \in Y} x_i\prod_{j \in  N}(1-x_j)$ we obtain $\sum_{\emptyset \subseteq T \subseteq N }(-1)^{|T|}y_{Y\cup T}$, hence the seemingly complicated sum in the definition of $M^{(r)}$ above.

\section{Parts omitted from Section~\ref{sec: SA lower bound}}

\subsection{Proof of Lemma~\ref{lem: local probabilities}}
\label{sec: local probabilities}

\begin{proof}
\begin{align}
\sum_{\emptyset \subseteq T \subseteq N }(-1)^{|T|} y_{Y\cup T}
& 
=
\sum_{\emptyset \subseteq T \subseteq N }(-1)^{|T|} \Exp{\distr_{p}}{\prod_{q \in Y\cup T} X_q} \notag \\
&=
\Exp{\distr_p}{\sum_{\emptyset \subseteq T \subseteq N }(-1)^{|T|} \prod_{q \in Y\cup T} X_q} \tag{Linearity of expectation}\\
&=
\Exp{\distr_p}{\prod_{q \in Y} X_q \prod_{q' \in N} \left( 1- X_{q'} \right) } \tag{$X_q \in \{0,1\}$}\\
& =
\pr{\distr_p}{ \forall q \in Y, x_q=1,~\&~\forall q' \in  N, x_{q'}=0}
\notag
\end{align}
\end{proof}

}


\end{document}